\documentclass[a4paper, authoryear, 12pt]{elsarticle}

\overfullrule = 0pt \topmargin -0.2in \textwidth 6.5in \textheight
9in \oddsidemargin 0.0in \evensidemargin 0.0in
\usepackage[english]{babel}
\usepackage{babel}
\usepackage{fontenc}
\usepackage[table]{xcolor}
\usepackage{boldline} 
\usepackage{graphicx}
\usepackage{caption}
\usepackage{booktabs}
\usepackage{amsmath,amsthm,amssymb}
\usepackage{natbib}
\usepackage[colorlinks=true,citecolor=blue]{hyperref}
\usepackage{color}
\usepackage{graphicx}
\usepackage{subfigure}
\usepackage{color}
\usepackage{newlfont}
\usepackage{multirow}
\usepackage{longtable} 
\setlength{\LTcapwidth}{6in} 
\usepackage{ifthen}
\usepackage{alltt}
\usepackage{enumerate}
\usepackage[text={6.25in,8.5in},centering]{geometry}
\usepackage{tikz}
\usepackage{epstopdf}
\usepackage{float}
\usepackage{arydshln}

\usepackage{tikz}
\usetikzlibrary{shapes,arrows}
\numberwithin{equation}{section}
\newtheorem{theorem}{Theorem}[section]
\newtheorem{lemma}{Lemma}[section]
\newtheorem{proposition}{\bf Proposition}[section]

\newtheorem{corollary}{\hskip\parindent\bf Corollary}[section]

\input epsf.sty
\usepackage{lineno}
\journal{Elsevier}

\begin{document}
	\title{Assessment of Lockdown Effect in Some States and Overall India: A Predictive Mathematical Study on COVID-19 Outbreak}

	\author[DAC]{Tridip Sardar \footnote{Corresponding author. Email: tridipiitk@gmail.com}}
	\author[ISI]{Sk Shahid Nadim }
	\author[VB]{Sourav Rana}
	\author[ISI]{Joydev Chattopadhyay}
	\address[DAC]{Department of Mathematics, Dinabandhu Andrews College, Kolkata, India}
	\address[ISI]{Agricultural and Ecological Research Unit, Indian Statistical Institute, Kolkata, India}
	\address[VB]{Department of Statistics, Visva-Bharati University, Santiniketan, West Bengal, India}
	
\begin{abstract}
In the absence of neither an effective treatment or vaccine and with an incomplete understanding of the epidemiological cycle, Govt. has implemented a nationwide lockdown to reduce COVID-19 transmission in India. To study the effect of social distancing measure, we considered a new mathematical model on COVID-19 that incorporates lockdown effect. By validating our model to the data on notified cases from five different states and overall India, we estimated several epidemiologically important parameters as well as the basic reproduction number ($R_{0}$). Combining the mechanistic mathematical model with different statistical forecast models, we projected notified cases in the six locations for the period May 17, 2020, till May 31, 2020. A global sensitivity analysis is carried out to determine the correlation of two epidemiologically measurable parameters on the lockdown effect and also on $R_{0}$. Our result suggests that lockdown will be effective in those locations where a higher percentage of symptomatic infection exists in the population. Furthermore, a large scale COVID-19 mass testing is required to reduce community infection. Ensemble model forecast suggested a high rise in the COVID-19 notified cases in most of the locations in the coming days. Furthermore, the trend of the effective reproduction number ($R_{t}$) during the projection period indicates if the lockdown measures are completely removed after May 17, 2020, a high spike in notified cases may be seen in those locations. Finally, combining our results, we provided an effective lockdown policy to reduce future COVID-19 transmission in India.  
\end{abstract}

\begin{keyword}
	COVID-19; Mathematical model; Lockdown effect; Outbreak containment policy
\end{keyword}

\maketitle

\section{Introduction}
As of May 15, 2020, 4593395 cases and 306376 deaths from 2019 novel coronavirus disease (COVID-19), caused by severe acute respiratory syndrome corona­virus 2 (SARS-CoV-2), were recorded worldwide \cite{csse2020}. Coronaviruses are enveloped non-segmented positive-sense RNA viruses that belong to the Coronaviridae family and the order Nidovirales, and are widely distributed among humans and other mammals \cite{richman2016clinical}. The novel coronavirus, COVID-19 started in mainland China, with a geographical emphasis at Wuhan, the capital city of Hubei province \cite{wang2020novel} and has widely spread all over the world. Many of the initial cases were usually introduced to the wholesale Huanan seafood market, which also traded live animals. Clinical trials of hospitalized patients found that patients exhibit symptoms consistent with viral pneumonia at the onset of COVID-19, most commonly fever, cough, sore throat and fatigue \cite{cdcgov2020}. Some patients reported changes in their ground-glass lungs; normal or lower than average white lymphocyte blood cell counts and platelet counts; hypoxemia; and deranged liver and kidney function. Most were said to be geographically related to the wholesale market of Huanan seafood \cite{chinadaily2019}. Severe outbreaks occur in USA ($1457593$ cases), Spain ($272646$ cases), Russia (262843 cases), UK (236711 cases), Italy (223885) and so many countries and the disease continues to spread globally. This has been declared a pandemic by the World Health Organization. It is the third zoonotic human coronavirus that has arisen in the present century, after the $2002$ severe acute respiratory syndrome coronavirus (SARS-CoV), which spread to 37 countries and the 2012 Middle East respiratory syndrome coronavirus (MERS-CoV), which spread to 27 countries.

The 2019 pandemic novel coronavirus was first confirmed in India on January 30, 2020, in the state of Kerala. A total of $82087$ confirmed cases and $2648$ deaths in the country have been reported as of May 15, 2020 \cite{indiacovid2020track}. The Indian government has introduced social distance as a precaution to avoid the possibility of a large-scale population movement that can accelerate the spread of the disease. India government implemented a 14-hour voluntary public curfew on 22 March 2020. Furthermore, the Prime Minister of India also ordered a nationwide lockdown at midnight on March 24, 2020 to slow the
spread of COVID-19. Despite no vaccine, social distancing has identified as the most commonly used prevention and control strategy \cite{ferguson2020report}. The purpose of these initiatives is the restriction of social interaction in workplaces, schools, and other public spheres, except for essential public services such as fire, police, hospitals. No doubt the spread of this virus outbreak has seriously disrupted the life, economy and health of citizens. This is a great concern for everyone how long this scenario will last and when the disease will be controlled.

Mathematical modeling based on system of differential equations may provide a comprehensive mechanism for the dynamics of a disease transmission~\cite{sardar2020realistic}. Several modeling studies have already been performed for the COVID-19 outbreak \cite{tang2020updated,quilty2020effectiveness,shen2020modelling,tang2020estimation, wu2020nowcasting}. Based on data collected from December 31, 2019 till January 28, 2020, Wu et al~\cite{wu2020nowcasting} developed a susceptible exposed infectious recovered model (SEIR) to clarify the transmission dynamics and projected national and global spread of disease. They also calculated around 2.68 is the basic reproductive number for COVID-19. Tang et al~\cite{tang2020estimation} proposed a compartmental deterministic model that would combine the clinical development of the disease, the epidemiological status of the patient and the measures for intervention. Researchers found that the amount of control reproduction number  may be as high as 6.47, and that methods of intervention including intensive touch tracing followed by quarantine and isolation would effectively minimize COVID cases \cite{tang2020estimation}.
For the basic reproductive number, Read et al. reported a value of 3.1 based on the data fitting of an SEIR model, using an assumption of Poisson-distributed  daily time increments \cite{read2020novel}. A report by Cambridge University  has indicated that India's countrywide three-week lockdown would not be adequate to prevent a resurgence of the new coronavirus epidemic that could bounce back in months and cause thousands of infections \cite{singh2020age}. They suggested that two or three lockdowns can extend the slowdown longer with five-day breaks in between or a single 49-day lockdown. Data-driven mathematical modeling plays a key role in disease prevention, planning for future outbreaks and determining the effectiveness of control. Several data-driven modeling experiments have been performed in various regions \cite{tang2020estimation,chen2020mathematical}.
Currently, there are very limited works that studied the impact of lockdown on COVID-19 transmission dynamics in India.

In the present manuscript, we proposed a new mathematical model for COVID-19 that incorporates the lockdown effect. We also considered variability in transmission between symptomatic and asymptomatic populations with former being a fast spreader of the disease. Analyzing COVID-19 daily notified cases from five affected states (Maharashtra, Delhi, Tamil Nadu, Gujarat and Punjab) and from overall India, we studied the effect of social distancing measures implemented by the Govt. on notified cases reduction in those regions. We also estimates the basic reproduction numbers ($R_{0}$) for these six locations. Using a post-processing BMA technique, we ensemble our COVID-19 mathematical model with different statistical forecast model to obtain a projection of notified cases in those six locations for the period May 17, 2020 till May 31, 2020. A global sensitivity analysis is carried out to determine the correlation of two epidemiologically measurable parameters on lockdown effect and also on $R_{0}$. Finally to determine the COVID-19 transmission trend during the projection period (May 17, 2020 till May 31, 2020), we estimate the effective reproduction number ($R_{t}$) for the mentioned six locations.

\section{Method}
Based on the development and epidemiological characteristics of COVID-19, a SEIR type model is more appropriate to study the dynamics of this current pandemic \citep{kucharski2020early, peng2020epidemic, sardar2020realistic}. The model we developed in this paper is based on the interaction of seven mutually exclusive sub-classes namely, Susceptible ($S$), Lockdown ($L$), Exposed ($E$), Asymptomatic ($A$), Symptomatic ($I$), Hospitalized ($C$), and Recovered ($R$).  

Susceptible population ($S$) increased due to constant recruitment rate $\Pi_{H}$ and those individuals coming back from lockdown compartment after the lockdown period $ \displaystyle\frac{1}{\omega}$. Population in the susceptible class decreased due to new infection in contact with symptomatic and asymptomatic infected population, natural death and also a fraction of the susceptible individuals become home-quarantine due to lockdown at a rate $l$. We also assumed variability in disease transmission in asymptomatic and symptomatic population with later being a fast spreader of infection with a variability factor $\left( 0 \leq \rho \leq 1\right)$~\citep{gumel2004modelling, mandal2020prudent}.  

Lockdown population ($L$) increased by those susceptible who are home-quarantined during the lockdown period $ \displaystyle\frac{1}{\omega}$, at a rate $l$. Population under lockdown is decreased due to natural death and those individuals who become susceptible again after the lockdown period $ \displaystyle\frac{1}{\omega}$. For simplicity, we assume ideal lockdown scenario \textit{i.e.} all population under lockdown maintained proper social distancing and do not contribute to new infection. 

Population in the exposed compartment ($E$) increased by new infection coming from susceptible compartment. A fraction $\kappa$ of the exposed individuals become symptomatic infected and remaining fraction ($1-\kappa$) become asymptomatic infected after the disease incubation period $\frac{1}{\sigma}$. Exposed population also decreased due to natural death at a rate $\mu$. 

Asymptomatic infected compartment ($A$) increased due to a fraction ($1-\kappa$) of infection coming from exposed compartment. Since, asymptomatic COVID-19 cases are hard to detect therefore, we assume that asymptomatic infection are not notified. Population in this compartment is decreased due to natural recovery and deaths at a rate $\gamma_1$ and $\mu$, respectively. 

Population in the symptomatic infected compartment ($I$) increased due to a fraction $\kappa$ of infection coming from exposed compartment after the incubation period $\frac{1}{\sigma}$. This compartment decreased due to natural recovery at a rate $\gamma_2$, natural death at a rate $\mu$ and those infected population who are notified \& hospitalized at a rate $\tau$. 

Notified \& hospitalized infected population ($C$) increased due to influx of infection coming from symptomatic infected class at a rate $\tau$. This population decreased due to natural death at a rate $\mu$, disease related deaths at a rate $\delta$, and recovery from COVID-19 at a rate $\gamma_3$. We assume that population of this compartment do not mix with the general population in the community \textit{i.e.} this compartment do not contribute in the COVID-19 transmission.  

Finally, recovered population ($R$) increased due to influx of individuals coming from asymptomatic ($A$), symptomatic ($I$), and notified \& hospitalized individuals ($C$) at a rate $\gamma_1$, $\gamma_2$, and $\gamma_3$, respectively. As we are analyzing this study in a shorter time frame therefore, we assume definitive immunity \textit{i.e.} recovered population do not contribute to new COVID-19 infection. Thus recovered population decreased due to natural death at a rate $\mu$.      

Based on the above assumptions the system of equations that represent COVID-19 transmission with and without lockdown are provided below:

\subsection*{\textbf{Model without lock-down}}
\begin{eqnarray}\label{EQ:eqn 2.1}
\displaystyle{ \frac{dS}{dt} } &=& \Pi_{H}-\frac{\beta_1 I S}{\left(N - C\right) }-\frac{\rho \beta_1 A S}{\left(N-C \right) }-\mu S\nonumber\\
\displaystyle{ \frac{dE}{dt} } &=& \frac{\beta_1 I S}{\left(N - C\right) }+\frac{\rho \beta_1 A S}{\left(N-C \right) }-(\mu+\sigma)E,\nonumber \\
\displaystyle{ \frac{dA}{dt} } &=& (1-\kappa)\sigma E-(\gamma_1+\mu)A,\nonumber \\
\displaystyle{ \frac{dI}{dt} } &=&  \kappa \sigma E-(\gamma_2+\tau+\mu)I, \\
\displaystyle{\frac{dC}{dt} } &=& \tau I-(\delta+\gamma_3+\mu)C, \nonumber \\
\displaystyle{ \frac{dR}{dt} } &=& \gamma_1 A+\gamma_2 I+\gamma_3 C-\mu R ,\nonumber
\end{eqnarray}

\subsection*{\textbf{Model with lock-down}}
\begin{eqnarray}\label{EQ:eqn 3.1}
\displaystyle{ \frac{dS}{dt} } &=& \Pi_{H} + \omega L -\frac{\beta_1 I S}{\left( N-L-C\right) }-\frac{\rho \beta_1 A S}{\left( N-L-C \right) }-\mu S-lS\nonumber\\
\displaystyle{ \frac{dL}{dt} } &=& l S - (\mu + \omega) L,\nonumber\\
\displaystyle{ \frac{dE}{dt} } &=& \frac{\beta_1 I S}{\left( N-L-C\right) }+\frac{\rho \beta_1 A S}{\left( N-L-C \right) }-(\mu+\sigma)E,\nonumber \\
\displaystyle{ \frac{dA}{dt} } &=& (1-\kappa)\sigma E-(\gamma_1+\mu)A,\nonumber \\
\displaystyle{ \frac{dI}{dt} } &=&  \kappa \sigma E-(\gamma_2+\tau+\mu)I, \\
\displaystyle{\frac{dC}{dt} } &=& \tau I-(\delta+\gamma_3+\mu)C, \nonumber \\
\displaystyle{ \frac{dR}{dt} } &=& \gamma_1 A+\gamma_2 I+\gamma_3 C-\mu R .\nonumber
\end{eqnarray}A diagram of our model is provided in Fig~\ref{Fig:Flow_India_covid}. Information of our model~ parameters is provided in Table~\ref{tab:mod1}.

\subsection*{\textbf{Mathematical properties of the model}}
We studied the positivity and boundedness of solution of the model \eqref{EQ:eqn 2.1} (see supplementary appendix). The system \eqref{EQ:eqn 2.1} demonstrates two equilibria, that is, the disease-free equlibrium and an unique endemic equilibrium (see supplementary appendix). The disease-free state is locally asymptotically stable whenever the corresponding basic reproduction number ($R_0$) is less than unity (see supplementary appendix). By using a nonlinear Lyapunov function, it is also seen that the disease-free equilibrium is globally asymptotically stable whenever $R_0<1$ (see supplementary appendix). In addition, the model \eqref{EQ:eqn 2.1} has an unique endemic equilibrium if $R_0$ exceeds unity. Furthermore, using the central manifold theory, the local stability of the endemic equilibrium is established whenever $R_0>1$ (see supplementary appendix).

\subsection*{\textbf{Data}}
Daily COVID-19 reported cases from Maharashtra (MH), Delhi (DL), Tamil Nadu (TN), Gujarat (GJ), Punjab (PJ) and whole India (IND) for the time period March 14, 2020 till May 3, 2020 are considered for our study. These five states are deeply affected by current COVID-19 outbreak in India \citep{indiacovid2020track}. Daily COVID-19 notified cases were collected from \citep{indiacovid2020track}. Demographic data of the different locations are taken from \cite{aadhaar20, Nitiayog2020}.

\subsection*{\textbf{Estimation procedure}}
Several important epidemiological parameters (see Table~\ref{tab:mod1}) of our mathematical model~\eqref{EQ:eqn 3.1} are estimated using COVID-19 daily reported cases from the mentioned six locations. Total time duration of lockdown implemented by Govt. is 54 days starting from March 25, 2020 till May 17, 2020. Time-series data of daily COVID-19 cases in our study for the locations MH, DL, TN, GJ, PJ, and IND, respectively contains both with and without lockdown effect. Therefore, a combination of our mathematical models~\eqref{EQ:eqn 2.1}\&~\eqref{EQ:eqn 3.1} (with and without lockdown) are used for calibration. From our models~\eqref{EQ:eqn 2.1}\&~\eqref{EQ:eqn 3.1}, new COVID-19 notified cases during the $i^{th}$ time interval $\left[t_{i}, t_{i}+\Delta t_{i}\right]$ is 

\begin{equation}
\displaystyle H_{i} (\hat{\theta}) = \displaystyle \tau \int_{t_{i}}^{t_{i} + \Delta t_{i}} I(\xi, \hat{\theta}) \hspace{0.2cm} d\xi,\\
\label{EQ:new-cases-from-model}
\end{equation} where, $\Delta t_{i}$ is the time step length and $\hat{\theta}$ is the set of unknown parameters of the models~\eqref{EQ:eqn 2.1}\&~\eqref{EQ:eqn 3.1} that are estimated. Then $K$ observation from the data and from the models~\eqref{EQ:eqn 2.1}\&~\eqref{EQ:eqn 3.1} are $\left\lbrace D_{1}, D_{2},..., D_{K} \right\rbrace$  and $\lbrace H_{1} (\hat{\theta}) , H_{2} (\hat{\theta}),...., H_{K} (\hat{\theta}) \rbrace $, respectively. Therefore, we constructed the sum of squares function~\citep{sardar2013optimal} as:  

\begin{equation}
\displaystyle SS (\hat{\theta}) = \displaystyle \sum_{i = 1}^{K} \left[D_{i} - H_{i} (\hat{\theta})\right]^2,\\
\label{EQ:sum-of-square-function}
\end{equation}

MATLAB based nonlinear least square solver $fmincon$ is used to fit simulated and observed daily COVID-19 notified cases for the mentioned states and the whole country. Delayed Rejection Adaptive Metropolis~\citep{haario2006dram} (DRAM) algorithm is used to sample the 95\% confidence region. An elaboration of this model fitting technique is provided in \citep{sardar2017mathematical}.  

\subsection*{\textbf{Statistical forecast models and the ensemble model}}
COVID-19 mathematical model we developed in this study may be efficient in capturing the transmission dynamics. However, as solution of the mathematical model is always smooth therefore, our model may not be able to replicate the fluctuations occurring in daily time-series data. Moreover, forecast of future COVID-19 cases based on a single mathematical model may not be very reliable approach. For this purpose, we used two statistical forecast models namely, Auto-regressive Integrated Moving Average (ARIMA); and ARMA errors, Trend and Seasonal components (TBATS) respectively. A Hybrid statistical model (HYBRID) based on the combination of ARIMA and TBATS is also used during forecast. Calibration of three statistical forecast models (ARIMA, TBATS and HYBRID) using COVID-19 daily notified cases from  MH, DL, TN, GJ, PJ, and IND, respectively during March 14, 2020 till May 3, 2020, are done using the $R$ package 'forecastHybrid'~\cite{Forecast2020}. Each individuals models (ARIMA and TBATS) are first fitted to the aforesaid time-series data and then we combined each models with weightage based on in sample error to obtain the HYBRID model~\cite{Forecast2020}. Prediction skill of the each three statistical forecast model (ARIMA, TBATS and HYBRID) are tested on the daily COVID-19 notified cases during May 4, 2020 till May 8, 2020 for each of the six locations (see supplementary Table~\ref{Tab:Goodness-of-Fit}). Based on the prediction skill (see supplementary Table~\ref{Tab:Goodness-of-Fit}), the best statistical forecast model is ensemble with our COVID-19 mathematical models~\eqref{EQ:eqn 2.1} and~\eqref{EQ:eqn 3.1}. A post-processing BMA technique based on 'DRAM' algorithm \cite{haario2006dram} is used to determine the weightage (see supplementary Table~\ref{Tab:estimated-weights} and Fig~\ref{Fig:Marginal-distribution-MH} to Fig~\ref{Fig:Marginal-distribution-IND}) to combine the best statistical model with the COVID-19 mathematical models~\eqref{EQ:eqn 2.1} and~\eqref{EQ:eqn 3.1}.     

\subsection*{\textbf{Disease forecasting under different lockdown scenario}}
Govt. have implemented lockdown all over India on March 25, 2020 and it will continue till May 17, 2020. The short and medium scale industries are largely affected by the lockdown~\cite{Economic_Times2, India_Today4}. To partially recover the economy, Govt. of India continuously relaxing the lockdown rules from April 20, 2020 \cite{Economictimes2020a, financialexpress20a, TheHindu20a}. To forecast COVID-19 cases for the period May 17, 2020 till May 31, 2020, for the six locations (MH, DL, TN, GJ, PJ and IND) based on the Govt. strategy, we considered following scenarios:\vspace{0.2cm}

Forecast based on current lockdown rate: We have estimated the average lockdown rate for our COVID-19 mathematical model (see Table~\ref{tab:mod1} and Table~\ref{Tab:estimated-parameters-Table}). Using this lockdown rate and using other parameters (estimated and known) of our mathematical models~\eqref{EQ:eqn 2.1} \&~\eqref{EQ:eqn 3.1}, we forecast COVID-19 notified cases during May 17, 2020 till May 31, 2020 for the locations MH, DL, TN, GJ, PJ, and IND, respectively. Finally, forecast based on our mathematical model is ensemble with the result based on the best statistical forecast model for a location mentioned earlier.\vspace{0.2cm}   

Forecast based on 15\% reduction in current lockdown rate: We followed same procedure as the previous scenario with 15\% decrement in the estimate of lockdown rate~(see Table~\ref{tab:mod1} and Table~\ref{Tab:estimated-parameters-Table}) to obtained the forecast during the mentioned time period.\vspace{0.2cm}

Forecast based on 20\% reduction in current lockdown rate: we followed the same procedure as previous two scenarios with 20\% decrement in the estimate of lockdown rate (see Table~\ref{tab:mod1} and Table~\ref{Tab:estimated-parameters-Table}) to obtained the forecast during the mentioned time period. \vspace{0.2cm}

Forecast based on 30\% reduction in current lockdown rate: we followed the same procedure as previous three scenarios with 30\% decrement in the estimate of lockdown rate (see Table~\ref{tab:mod1} and Table~\ref{Tab:estimated-parameters-Table}) to obtain the forecast during the mentioned time period.\vspace{0.2cm}   

Forecast based on no lockdown: Continue as earlier and assuming lockdown is lifted after May 17, 2020, we forecast COVID-19 notified cases during May 17, 2020 till May 31, 2020, for the six mentioned locations.\vspace{0.2cm} 
 
\subsection*{\textbf{Estimation of the basic and the effective reproduction number}}
Since we assumed that population under the lockdown do not contact with the infection from the community therefore the basic reproduction number ($R_{0}$) \citep{van2002reproduction} for our mathematical model with and without lockdown~(see Fig~\ref{Fig:Flow_India_covid} and supplementary method) are same and its expression is provided below:

\begin{align*}
R_0=\frac{\beta_1\kappa \sigma}{(\mu+\sigma)(\gamma_2+\tau+\mu)}+\frac{\rho \beta_1(1-\kappa)\sigma}{(\mu+\sigma)(\gamma_1+\mu)}.
\end{align*}

The effective reproductive number ($R_{t}$) is defined as the expected number of secondary infection per infectious in a population made up of both susceptible and non-susceptible hosts \citep{rothman2008modern}. If $R_{t} > 1$, the number of new cases will increase, for $R_{t} =1$, the disease become endemic, and when $R_{t} < 1$ there will be a decline in new cases.

Following~\citep{rothman2008modern}, the expression of $R_{t}$ is given as follows:

\begin{align*}
\displaystyle R_t= R_{0} \times \hat{s},  
\end{align*} where, $\hat{s}$ is the fraction of the host population that is susceptible. 

$R_{0}$ can easily be estimated by plugin the sample values of the unknown parameters (see Table \ref{Tab:estimated-parameters-Table}) of the model without lockdown~\eqref{EQ:eqn 2.1} in the expressions of $R_{0}$. 

Following procedure is adapted to estimate $R_{t}$ during May 17, 2020 till May 31, 2020 under two lockdown scenarios:

\begin{description}
\item[$\bullet$] Using current estimate of the lockdown rate and different parameters of our mathematical model~(see Table~\ref{tab:mod1} and Table~\ref{Tab:estimated-parameters-Table}), we estimate $\hat{s}$ and $R_{t}$ during May 17, 2020 till May 31, 2020, for the locations MH, DL, TN, GJ, PJ and IND, respectively. 

\item[$\bullet$] Using different parameters~(see Table~\ref{tab:mod1} and Table \ref{Tab:estimated-parameters-Table}) of our mathematical model without lockdown~\eqref{EQ:eqn 2.1}, we estimated $\hat{s}$ and $R_{t}$ during May 17, 2020 till May 31, 2020 for the mentioned six locations.	
\end{description} 

\subsection*{\textbf{Sensitivity analysis and effective lockdown strategy}}
To determine an effective lockdown policy in those six locations (MH, DL, TN, GJ, PJ and IND) will require some correlation between lockdown effect with some epidemiologically measurable parameters of our mathematical model~(see Fig~\ref{Fig:Flow_India_covid}). There are several important parameters of our mathematical model~(see Table~\ref{tab:mod1}) and among them there are two parameters that are measurable namely, $\kappa$: fraction of new infected that become symptomatic (COVID-19 testing will provide an accurate estimate) and $\tau$: Average notification \& hospitalization rate of symptomatic COVID-19 infection (this parameter is proportional to the number of COVID-19 testing). Lockdown effect is measured as the difference between the total number of cases projected by our ensemble model with and without lockdown. A global sensitivity analysis~\cite{marino2008methodology} is performed to determine the effect of the mentioned two parameters on the lockdown effect and on the basic reproduction number ($R_{0}$). Using Latin Hyper cube sampling (LHS), we draw $1000$ samples for $\kappa$ and $\tau$, respectively from their respective ranges~(see Table~\ref{tab:mod1}). Partial rank correlation and its corresponding $p$-value are examined to determine the relation between two mentioned parameters with the lockdown effect and $R_{0}$, respectively.

\section{Results and discussion}
Three models (mathematical, statistical forecast and ensemble) fitting to daily COVID-19 notified cases during March 14, 2020 till May 3, 2020, for Maharashtra (MH), Delhi (DL), Tamil Nadu (TN), Gujarat (GJ), Punjab (PJ), and India (IND) is depicted in Fig~\ref{Fig:Model-fitting}. Among the three statistical forecast models (ARIMA, TBATS and HYBRID), the ARIMA model performed better on the test prediction data (May 4, 2020 till May 8, 2020) of DL and GJ (see supplementary Table~\ref{Tab:Goodness-of-Fit}). Whereas, the TBATS model provide better result in compare to other two models for the test prediction data (May 4, 2020 till May 8, 2020) of PJ (see supplementary Table~\ref{Tab:Goodness-of-Fit}). For the remaining three locations (MH, TN and IND), the HYBRID model provide the best result (see supplementary Table~\ref{Tab:Goodness-of-Fit}) on the test prediction data (May 4, 2020 till May 8, 2020). The ensemble model, which is a combination of our COVID-19 mathematical models~\eqref{EQ:eqn 2.1} \&~\eqref{EQ:eqn 3.1} and the best statistical forecast model (region specific), is performed well in capturing COVID-19 daily time-series data trend in all the six locations. Posterior distribution of the weights at which we combine our COVID-19 mathematical model with the best statistical forecast model in the six different locations are provided in supplementary method (see Fig~\ref{Fig:Marginal-distribution-MH} to Fig~\ref{Fig:Marginal-distribution-IND} and Table~\ref{Tab:estimated-weights}).  
  
In MH, DL, and GJ, the estimate of the symptomatic influx fraction ($\kappa$) suggest that low percentage (about 11\% to 20\%) of symptomatic infected in the population (see Table~\ref{Tab:estimated-parameters-Table}). However, in TN and PJ, relatively higher percentage (about 82\% to 88\%) of symptomatic infection is found (see Table~\ref{Tab:estimated-parameters-Table}). In overall India, our estimate shows that currently about 62\% of new infection are symptomatic (see Table~\ref{Tab:estimated-parameters-Table}). Except for GJ, in other five locations, estimate of the transmission rate ($\beta_{1}$) are found to be in same scale (see Table~\ref{Tab:estimated-parameters-Table}). Relatively higher value of $\beta_{1}$ is found in Gujrat (see Table~\ref{Tab:estimated-parameters-Table}). Low value of the transmission variability factor ($\rho$) indicates most of the community infection in GJ are due to contact with the symptomatic infected population. As in GJ, the value of $\kappa$ is found to be small (about 11\%) therefore, relatively smaller symptomatic population producing most of the infection in GJ. This indicate that there may be a possibility of existence of super-spreaders among the symptomatic infected in GJ. This observation is agree with a recent survey result in GJ~\cite{NDTVsuperspreaders}. Except for PJ, in other five locations, the estimates of $\rho$ (below 50\%) are found to be low (see Table~\ref{Tab:estimated-parameters-Table}). This indicates small contribution of the asymptomatic infected population towards the new infection produced in MH, DL, TN, GJ and IND, respectively. Estimate of the lockdown rate in the five states (MH, DL, TN, GJ and PJ) suggest that around 50\% to 88\% of the total susceptible population are successfully home quarantined during the lockdown period (see Table~\ref{Tab:estimated-parameters-Table}). Thus, lockdown is overall successful in those five states. However, this is not the case for overall India, our estimate suggest that about 11\% of the total susceptible population in India maintained proper social distancing during the lockdown period (see Table~\ref{Tab:estimated-parameters-Table}).          

Our estimate of the basic reproduction number ($R_{0}$)~(see Table~\ref{Tab:estimated-R0-Table}), in the six locations found to be in good agreement of the world-wide estimate provided by WHO~\cite{liu2020reproductive}. We performed a global sensitivity analysis of two epidemiologically measurable parameters of our mathematical model~(see Fig~\ref{Fig:Flow_India_covid}) namely $\kappa$: fraction of new infected that become symptomatic and $\tau$: Average notification \& hospitalization rate of symptomatic COVID-19 infection, on $R_{0}$. Partial rank correlation and its corresponding $p$-value (see Fig~\ref{Fig:sensitivity-analysis-R0}) suggest that $\tau$ has a negative correlation on $R_{0}$. Thus, more testing will isolate more infection from the community and therefore may reduce the COVID-19 community transmission. Furthermore, high positive correlation of $\kappa$ with $R_{0}$ (see Fig~\ref{Fig:sensitivity-analysis-R0}) indicates the possibility of high COVID-19 transmission in those areas where population have higher percentage of symptomatic infection.    

Ensemble model forecast of notified COVID-19 cases between May 17, 2020 till May 31, 2020 (see Table~\ref{Tab:cases-preiction-Table}, Fig.~\ref{Fig:Prediction-cases-India}, and Fig~\ref{Fig:Prediction-cases-MH} to Fig~\ref{Fig:Marginal-distribution-PJ} in supplementary appendix) indicate that in the coming few days, a high increment in the COVID-19 notified cases may be observed in MH, DL, TN, GJ, PJ, and IND. Furthermore, our ensemble model prediction during the mentioned period suggest that around 117645 to 128379 cases may occurred in overall India (see Table~\ref{Tab:cases-preiction-Table}). These numbers are much higher than the total cumulative cases between March 2, 2020 till May 15, 2020, in whole India.  
  
A global sensitivity analysis of $\kappa$ and $\tau$ on the lockdown effect suggest that both of these parameters have high positive correlation with the lockdown effect in all the six locations (see Fig.~\ref{Fig:Sensitivity-lockdown-effect}). Therefore, lockdown will be effective in those region where higher percentage of symptomatic infection is found in the population and also larger COVID-19 mass testing will be required to isolate the cases. 

To measure the COVID-19 transmission trend during May 17, 2020 till May 31, 2020, we estimated the effective reproduction number ($R_{t}$) during the mentioned period for MH, DL, TN, GJ, PJ and IND, respectively (see Fig.~\ref{Fig:Effective-reproduction-number}). Our result suggest that, a decreasing trend in new notified COVID-19 cases ($R_{t} < 1$) may be seen after May 31, 2020 if current lockdown measures (see Table~\ref{Tab:estimated-parameters-Table}) are maintained in DL, TN and PJ, respectively. Furthermore, if social distancing measures are removed after May 17, 2020, we may see a rise in the daily COVID-19 cases in all of the six locations (see Fig.~\ref{Fig:Effective-reproduction-number}).

\section{Conclusion}

Up to May 15, 2020, total number of reported COVID-19 cases and deaths in India are \textbf{81794} and \textbf{2649}, respectively \citep{indiacovid2020track}. This tally rises with few thousand new notified cases every day reported from different locations in India \citep{indiacovid2020track}. Currently, there is no treatment or vaccine available for COVID-19. Therefore, only measure to control the outbreak may be home quarantined (lockdown) a larger percentage of susceptible population. However, this disease control strategy may have some negative impact on the economy. Therefore, it is utmost important to determine an effective lockdown policy that may reduce COVID-19 transmission in the community as well as save the Indian economy from drowning. This policy may be found by studying the dynamics and prediction of a mechanistic mathematical model for COVID-19 transmission and testing the results in real life situation.     

In this present study, we consider a new mathematical model on COVID-19 transmission that incorporates the lockdown effect (see Fig~\ref{Fig:Flow_India_covid}). In our models~\eqref{EQ:eqn 2.1} \&~\eqref{EQ:eqn 3.1}, we also considered transmission variability between symptomatic and asymptomatic population with former being a fast spreader of the disease. Using daily time-series data of notified COVID-19 cases from five states (Maharashtra, Delhi, Tamil Nadu, Gujarat and Punjab) and overall India, we studied the effect of lockdown measures on the reduction of notified cases in those regions. Our result suggest that lockdown will be effective in those locations where higher percentage of symptomatic infection exist in the population. Furthermore, a large scale COVID-19 mass testing is required to reduce community infection in those locations. Using a post-processing BMA technique, we ensemble the prediction of our mathematical model with the results obtained from different statistical forecast model. Our ensemble model forecast of COVID-19 daily notified cases during May 17, 2020 till May 31, 2020, suggested a very high rise in the COVID-19 notified cases in the mentioned time duration in most of the locations. Furthermore, estimation of the effective reproduction number ($R_{t}$) during the mentioned time duration indicates if the lockdown measures are completely removed after May 17, 2020, in those locations, a high spike in COVID-19 notified cases may be seen during the mentioned forecasting period. We provide a suggestion for the Indian Govt. and policy makers to acquire the following steps for effective containment of COVID-19 transmission: 

\begin{enumerate}
	\item Perform a survey to find the percentage of symptomatic infection in different states and regions.
	
	\item Focus implementing extensive lockdown in those locations only where the percentage of symptomatic infection is high.
	
	\item Provide relaxation in lockdown in other locations for some time. This process will increase the percentage of symptomatic infection.
	
	\item Repeat step-2, when a region has a sufficient percentage of symptomatic infection.
	
\end{enumerate} 

There are some drawback in our study and may be modified in future. We assume that lockdown population ($L$) and notified \& hospitalized infection ($C$) do not mix with the general population in the community. However, there are numerous evidences where disease transmitted from the hospital and from home confined individuals~\cite{NDTVhospital20}. We shall leave these challenges for our future objectives.

\section*{Conflict of interests}
The authors declare that they have no conflicts of interest.

\section*{Acknowledgments}
The authors are grateful to editor-in-chief, handling editor and learned reviewers for their comments and suggestions on the earlier version of this manuscript. The comments immensely improve the standard of this article.\\

Dr. Tridip Sardar acknowledges the Science \& Engineering Research Board (SERB) major project grant  (File No: EEQ/2019/000008 dt. 4/11/2019), Government of India.\\

Sk Shahid Nadim receives funding as senior research fellowship from Council of Scientific \& Industrial Research (Grant No: 09/093(0172)/2016/EMR-I), Government of India, New Delhi.\\ 

The Funder had no role in study design, data collection and analysis, decision to publish, or preparation of the manuscript.

\clearpage
\bibliographystyle{ieeetr}
\biboptions{square}
\bibliography{covid19_india}

\begin{thebibliography}{10}

\bibitem{csse2020}
``{Coronavirus COVID-19 Global Cases by the Center for Systems Science and
  Engineering.}.''
  \url{https://gisanddata.maps.arcgis.com/apps/opsdashboard/index.html#/bda7594740fd40299423467b48e9ecf6},
  2020.
\newblock Retrieved : 2020-04-02.

\bibitem{richman2016clinical}
D.~D. Richman, R.~J. Whitley, and F.~G. Hayden, {\em Clinical virology}.
\newblock John Wiley \& Sons, 2016.

\bibitem{wang2020novel}
C.~Wang, P.~W. Horby, F.~G. Hayden, and G.~F. Gao, ``A novel coronavirus
  outbreak of global health concern,'' {\em The Lancet}, vol.~395, no.~10223,
  pp.~470--473, 2020.

\bibitem{cdcgov2020}
``Centers for disease control and prevention: 2019 novel coronavirus..''
  \url{https: //www.cdc.gov/coronavirus/2019-ncov}, 2020.
\newblock Retrieved : 2020-03-10.

\bibitem{chinadaily2019}
``{Wuhan} wet market closes amid pneumonia outbreak.''
  \url{https://www.chinadaily.com.cn/a/202001/01/WS5e0c6a49a310cf3e35581e30.html},
  2019.
\newblock Retrieved : 2020-03-04.

\bibitem{indiacovid2020track}
``{India COVID-19 Tracker}.'' \url{https://www.covid19india.org/}, 2020.
\newblock Retrieved : 2020-04-03.

\bibitem{ferguson2020report}
N.~Ferguson, D.~Laydon, G.~Nedjati~Gilani, N.~Imai, K.~Ainslie, M.~Baguelin,
  S.~Bhatia, A.~Boonyasiri, Z.~Cucunuba~Perez, G.~Cuomo-Dannenburg, {\em
  et~al.}, ``Report 9: Impact of non-pharmaceutical interventions (npis) to
  reduce covid19 mortality and healthcare demand,'' 2020.

\bibitem{sardar2020realistic}
T.~Sardar, I.~Ghosh, X.~Rod{\'o}, and J.~Chattopadhyay, ``A realistic
  two-strain model for mers-cov infection uncovers the high risk for epidemic
  propagation,'' {\em PLOS Neglected Tropical Diseases}, vol.~14, no.~2,
  p.~e0008065, 2020.

\bibitem{tang2020updated}
B.~Tang, N.~L. Bragazzi, Q.~Li, S.~Tang, Y.~Xiao, and J.~Wu, ``An updated
  estimation of the risk of transmission of the novel coronavirus
  (2019-ncov),'' {\em Infectious Disease Modelling}, vol.~5, pp.~248--255,
  2020.

\bibitem{quilty2020effectiveness}
B.~J. Quilty, S.~Clifford, {\em et~al.}, ``Effectiveness of airport screening
  at detecting travellers infected with novel coronavirus (2019-ncov),'' {\em
  Eurosurveillance}, vol.~25, no.~5, 2020.

\bibitem{shen2020modelling}
M.~Shen, Z.~Peng, Y.~Xiao, and L.~Zhang, ``Modelling the epidemic trend of the
  2019 novel coronavirus outbreak in china,'' {\em bioRxiv}, 2020.

\bibitem{tang2020estimation}
B.~Tang, X.~Wang, Q.~Li, N.~L. Bragazzi, S.~Tang, Y.~Xiao, and J.~Wu,
  ``Estimation of the transmission risk of the {2019-nCoV} and its implication
  for public health interventions,'' {\em Journal of Clinical Medicine},
  vol.~9, no.~2, p.~462, 2020.

\bibitem{wu2020nowcasting}
J.~T. Wu, K.~Leung, and G.~M. Leung, ``Nowcasting and forecasting the potential
  domestic and international spread of the 2019-ncov outbreak originating in
  wuhan, china: a modelling study,'' {\em The Lancet}, vol.~395, no.~10225,
  pp.~689--697, 2020.

\bibitem{read2020novel}
J.~M. Read, J.~R. Bridgen, D.~A. Cummings, A.~Ho, and C.~P. Jewell, ``Novel
  coronavirus 2019-ncov: early estimation of epidemiological parameters and
  epidemic predictions,'' {\em MedRxiv}, 2020.

\bibitem{singh2020age}
R.~Singh and R.~Adhikari, ``Age-structured impact of social distancing on the
  covid-19 epidemic in india,'' {\em arXiv preprint arXiv:2003.12055}, 2020.

\bibitem{chen2020mathematical}
T.~Chen, J.~Rui, Q.~Wang, Z.~Zhao, J.-A. Cui, and L.~Yin, ``A mathematical
  model for simulating the transmission of wuhan novel coronavirus,'' {\em
  bioRxiv}, 2020.

\bibitem{kucharski2020early}
A.~J. Kucharski, T.~W. Russell, C.~Diamond, Y.~Liu, J.~Edmunds, S.~Funk, R.~M.
  Eggo, F.~Sun, M.~Jit, J.~D. Munday, {\em et~al.}, ``{Early dynamics of
  transmission and control of COVID-19: a mathematical modelling study},'' {\em
  The Lancet Infectious Diseases}, 2020.

\bibitem{peng2020epidemic}
L.~Peng, W.~Yang, D.~Zhang, C.~Zhuge, and L.~Hong, ``Epidemic analysis of
  covid-19 in china by dynamical modeling,'' {\em arXiv preprint
  arXiv:2002.06563}, 2020.

\bibitem{gumel2004modelling}
A.~B. Gumel, S.~Ruan, T.~Day, J.~Watmough, F.~Brauer, P.~Van~den Driessche,
  D.~Gabrielson, C.~Bowman, M.~E. Alexander, S.~Ardal, {\em et~al.},
  ``Modelling strategies for controlling sars outbreaks,'' {\em Proceedings of
  the Royal Society of London. Series B: Biological Sciences}, vol.~271,
  no.~1554, pp.~2223--2232, 2004.

\bibitem{mandal2020prudent}
S.~Mandal, T.~Bhatnagar, N.~Arinaminpathy, A.~Agarwal, A.~Chowdhury,
  M.~Murhekar, R.~R. Gangakhedkar, and S.~Sarkar, ``Prudent public health
  intervention strategies to control the coronavirus disease 2019 transmission
  in india: A mathematical model-based approach,'' {\em The Indian journal of
  medical research}, vol.~151, no.~2-3, p.~190, 2020.

\bibitem{aadhaar20}
``Statewise population data..''
  \url{http://statisticstimes.com/demographics/population-of-indian-states.php},
  2019.
\newblock Retrieved : 2020-04-05.

\bibitem{Nitiayog2020}
``Life expectancy at birth..''
  \url{https://niti.gov.in/content/life-expectancy}, 2019.
\newblock Retrieved : 2020-04-05.

\bibitem{sardar2013optimal}
T.~Sardar, S.~Mukhopadhyay, A.~R. Bhowmick, and J.~Chattopadhyay, ``An optimal
  cost effectiveness study on {Z}imbabwe cholera seasonal data from
  2008--2011,'' {\em PLoS One}, vol.~8, no.~12, p.~e81231, 2013.

\bibitem{haario2006dram}
H.~Haario, M.~Laine, A.~Mira, and E.~Saksman, ``Dram: efficient adaptive
  mcmc,'' {\em Statistics and computing}, vol.~16, no.~4, pp.~339--354, 2006.

\bibitem{sardar2017mathematical}
T.~Sardar and B.~Saha, ``Mathematical analysis of a power-law form time
  dependent vector-borne disease transmission model,'' {\em Mathematical
  biosciences}, vol.~288, pp.~109--123, 2017.

\bibitem{Forecast2020}
D.~Shaub and P.~Ellis, {\em forecastHybrid: Convenient Functions for Ensemble
  Time Series Forecasts}, 2020.
\newblock https://gitlab.com/dashaub/forecastHybrid,
  https://github.com/ellisp/forecastHybrid.

\bibitem{Economic_Times2}
``World's biggest lockdown may have cost {R}s 7-8 lakh crore to {I}ndian
  economy.''
  \url{https://economictimes.indiatimes.com/news/economy/finance/worlds-biggest-lockdown-may-have-cost-rs-7-8-lakh-crore-to-indian-economy/articleshow/75123004.cms?from=mdr},
  2020.
\newblock Published on: 2020-04-13.

\bibitem{India_Today4}
``Coronavirus: 5 {I}ndian sectors that need urgent help as virus ravages
  economy.''
  \url{https://www.indiatoday.in/business/story/coronavirus-5-indian-sectors-that-need-urgent-help-as-virus-ravages-economy-1670099-2020-04-23},
  2020.
\newblock Published on: 2020-04-24.

\bibitem{Economictimes2020a}
``Red, orange, green zones may come up during extended lockdown.''
  \url{https://economictimes.indiatimes.com/news/politics-and-nation/red-orange-green-zones-may-come-up-during-extended-lockdown/articleshow/75101054.cms?from=mdr},
  2020.
\newblock Published on: 2020-04-14.

\bibitem{financialexpress20a}
``{COVID-19} lockdown relaxation: Full list of activities allowed from april
  20; details.''
  \url{https://www.financialexpress.com/lifestyle/covid-19-lockdown-relaxation-full-list-of-activities-allowed-from-april-20-details/1933136/},
  2020.
\newblock Published on: 2020-04-19.

\bibitem{TheHindu20a}
``Coronavirus lockdown | railways to restart passenger services in phased
  manner.''
  \url{https://www.thehindu.com/news/national/coronavirus-lockdown-railways-to-resume-select-passenger-train-services-from-may-12/article31551514.ece},
  2020.
\newblock Published on: 2020-05-10.

\bibitem{van2002reproduction}
P.~Van~den Driessche and J.~Watmough, ``Reproduction numbers and sub-threshold
  endemic equilibria for compartmental models of disease transmission,'' {\em
  Mathematical biosciences}, vol.~180, no.~1-2, pp.~29--48, 2002.

\bibitem{rothman2008modern}
K.~J. Rothman, S.~Greenland, and T.~L. Lash, {\em Modern epidemiology}.
\newblock Lippincott Williams \& Wilkins, 2008.

\bibitem{marino2008methodology}
S.~Marino, I.~B. Hogue, C.~J. Ray, and D.~E. Kirschner, ``A methodology for
  performing global uncertainty and sensitivity analysis in systems biology,''
  {\em Journal of theoretical biology}, vol.~254, no.~1, pp.~178--196, 2008.

\bibitem{NDTVsuperspreaders}
``334 coronavirus "super-spreaders" found in ahmedabad: Officials.''
  \url{https://www.ndtv.com/ahmedabad-news/coronavirus-india-334-covid-19-super-spreaders-found-in-ahmedabad-gujarat-official-2226442},
  2020.
\newblock Published on: 2020-05-10.

\bibitem{liu2020reproductive}
Y.~Liu, A.~A. Gayle, A.~Wilder-Smith, and J.~Rockl{\"o}v, ``The reproductive
  number of {COVID-19} is higher compared to {SARS} coronavirus,'' {\em Journal
  of travel medicine}, 2020.

\bibitem{NDTVhospital20}
``{19 Doctors, 38 Nurses Among 480 Infected With Coronavirus In AIIMS}.''
  \url{https://www.ndtv.com/delhi-news/coronavirus-19-doctors-38-nurses-among-480-infected-in-aiims-2240563},
  2020.
\newblock Published on: 2020-06-04.

\bibitem{TimesofIndia54days}
``What the cms told pm modi during consultations over covid-19.''
  \url{https://timesofindia.indiatimes.com/india/what-the-cms-told-pm-modi-during-consultations-over-covid-19/articleshow/75677731.cms},
  2020.
\newblock Published on: 2020-05-11.

\bibitem{yang1996permanence}
X.~Yang, L.~Chen, and J.~Chen, ``Permanence and positive periodic solution for
  the single-species nonautonomous delay diffusive models,'' {\em Computers \&
  Mathematics with Applications}, vol.~32, no.~4, pp.~109--116, 1996.

\bibitem{hethcote2000mathematics}
H.~W. Hethcote, ``The mathematics of infectious diseases,'' {\em SIAM review},
  vol.~42, no.~4, pp.~599--653, 2000.

\bibitem{anderson1979population}
R.~M. Anderson and R.~M. May, ``Population biology of infectious diseases: Part
  i,'' {\em Nature}, vol.~280, no.~5721, pp.~361--367, 1979.

\bibitem{anderson1992infectious}
R.~M. Anderson, B.~Anderson, and R.~M. May, {\em Infectious diseases of humans:
  dynamics and control}.
\newblock Oxford university press, 1992.

\bibitem{lasalle1976stability}
J.~P. LaSalle, {\em The stability of dynamical systems}, vol.~25.
\newblock Siam, 1976.

\bibitem{smith1995theory}
H.~L. Smith and P.~Waltman, {\em The theory of the chemostat: dynamics of
  microbial competition}, vol.~13.
\newblock Cambridge university press, 1995.

\bibitem{castillo2004dynamical}
C.~Castillo-Chavez and B.~Song, ``Dynamical models of tuberculosis and their
  applications,'' {\em Mathematical Biosciences \& Engineering}, vol.~1, no.~2,
  p.~361, 2004.

\end{thebibliography}

\clearpage
\begin{center}
	\section*{\Large{\underline{Figures}}}
\end{center}

\begin{figure}[ht]
	\captionsetup{width=1.1\textwidth}
		\includegraphics[width=1.0\textwidth]{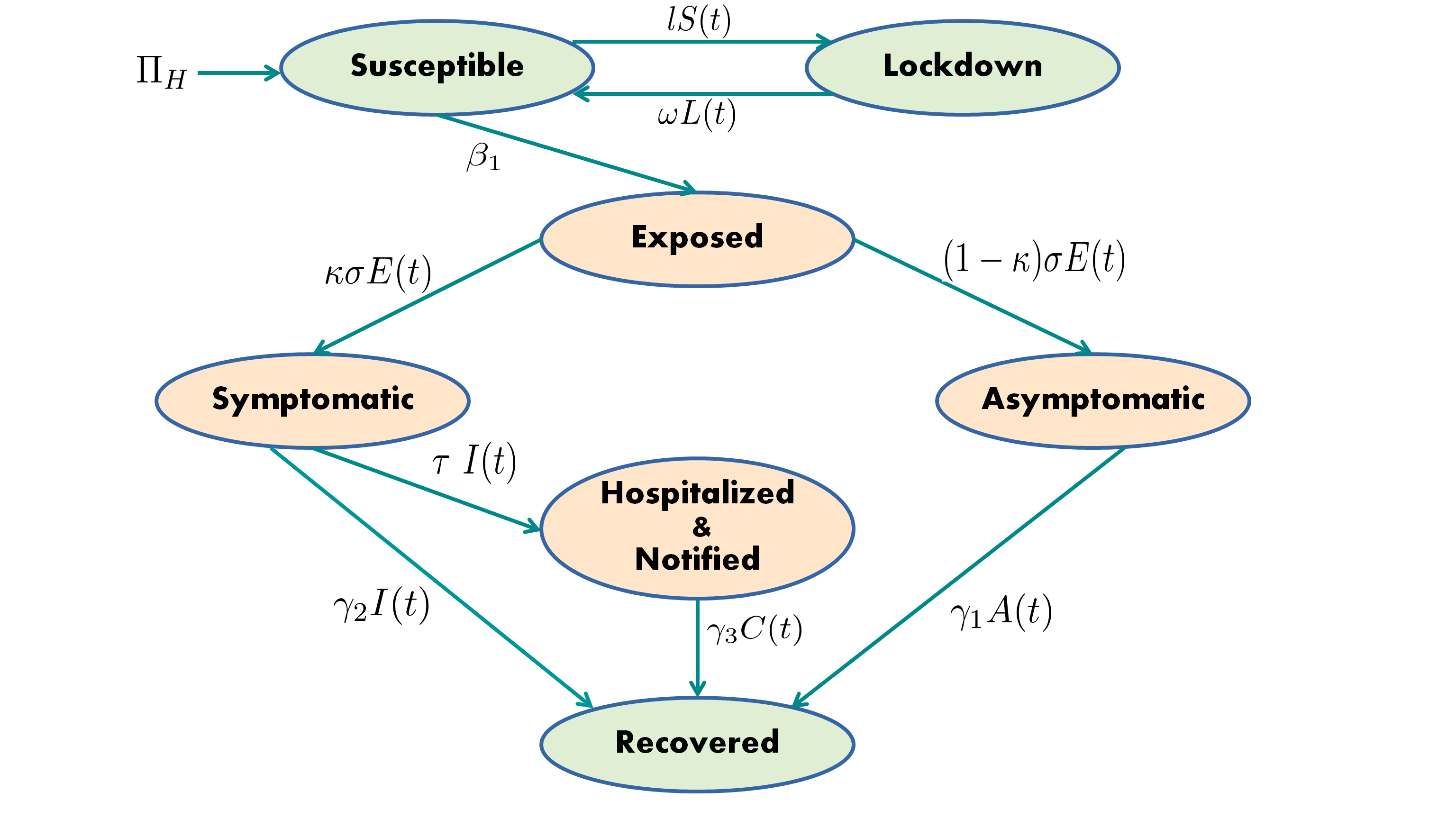}	
	\caption{Flow diagram of the mathematical model with lockdown. Epidemiological information of different parameters shown in the above figure are provided in Table~\ref{tab:mod1}.}
	\label{Fig:Flow_India_covid}
\end{figure}

\begin{figure}[ht]
	\captionsetup{width=1.1\textwidth}
	\includegraphics[width=1.15\textwidth]{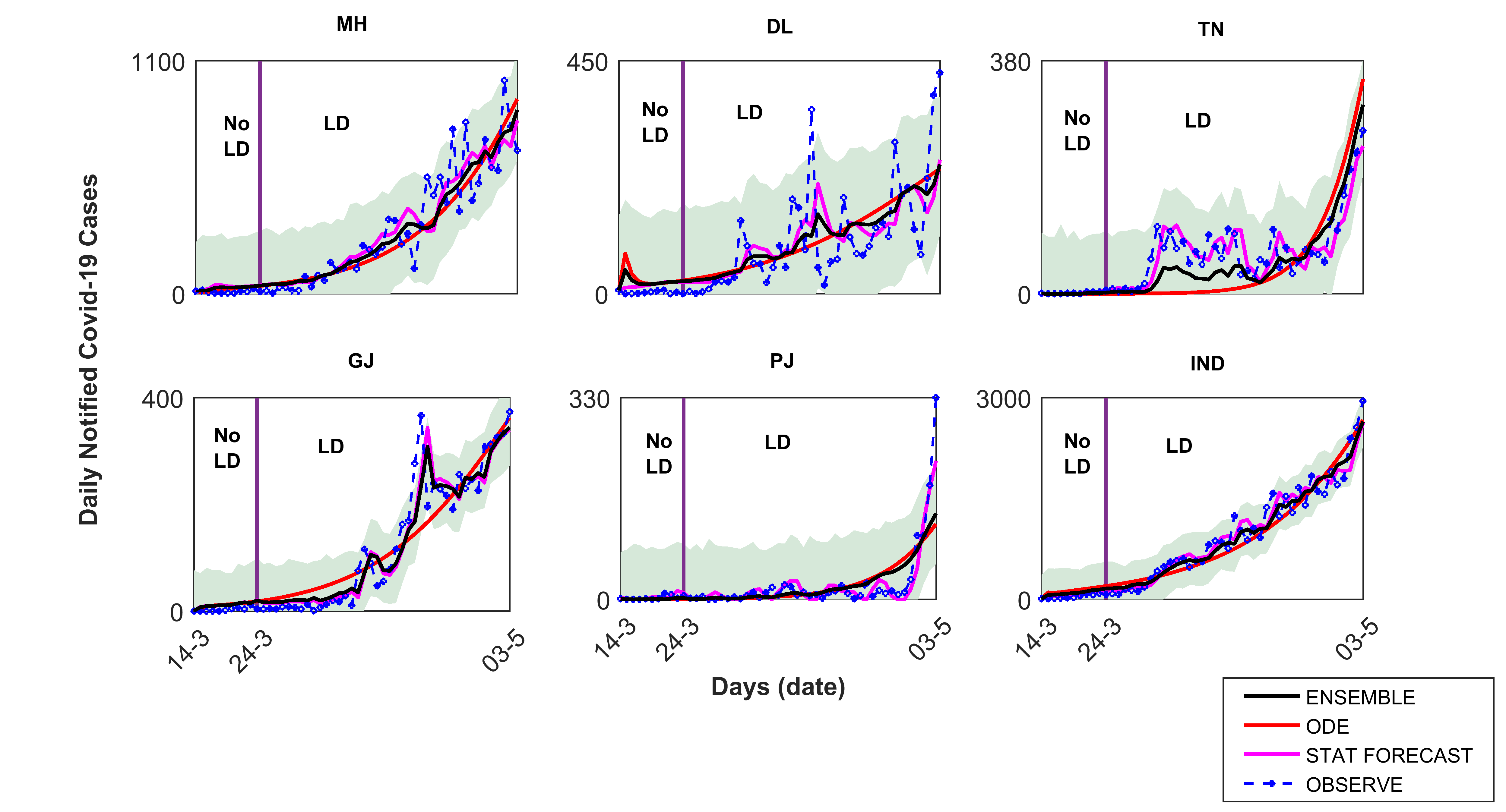}
	
	\caption{Three models (mechanistic mathematical model, statistical forecast model and ensemble model) fitting to daily notified COVID-19 cases from five different states and overall India for the period March 14, 2020 till May 3, 2020. Respective subscript are MH: Maharashtra, DL: Delhi, TN: Tamil Nadu, GJ: Gujarat, PJ: Punjab, and IND: India. Here \textbf{LD} denotes lockdown period and \textbf{No LD} indicate the period before lockdown implementation. Statistical forecast model for different locations are ARIMA (DL and GJ), TBATS (PJ) and HYBRID (MH, TN and IND), respectively. Lockdown effect is only considered for the mechanistic mathematical model and consequently in the ensemble model. Shaded area indicate the 95\% confidence region.} 
	\label{Fig:Model-fitting}
\end{figure}

\begin{figure}[ht]
	\captionsetup{width=1.1\textwidth}
	{\includegraphics[width=1\textwidth]{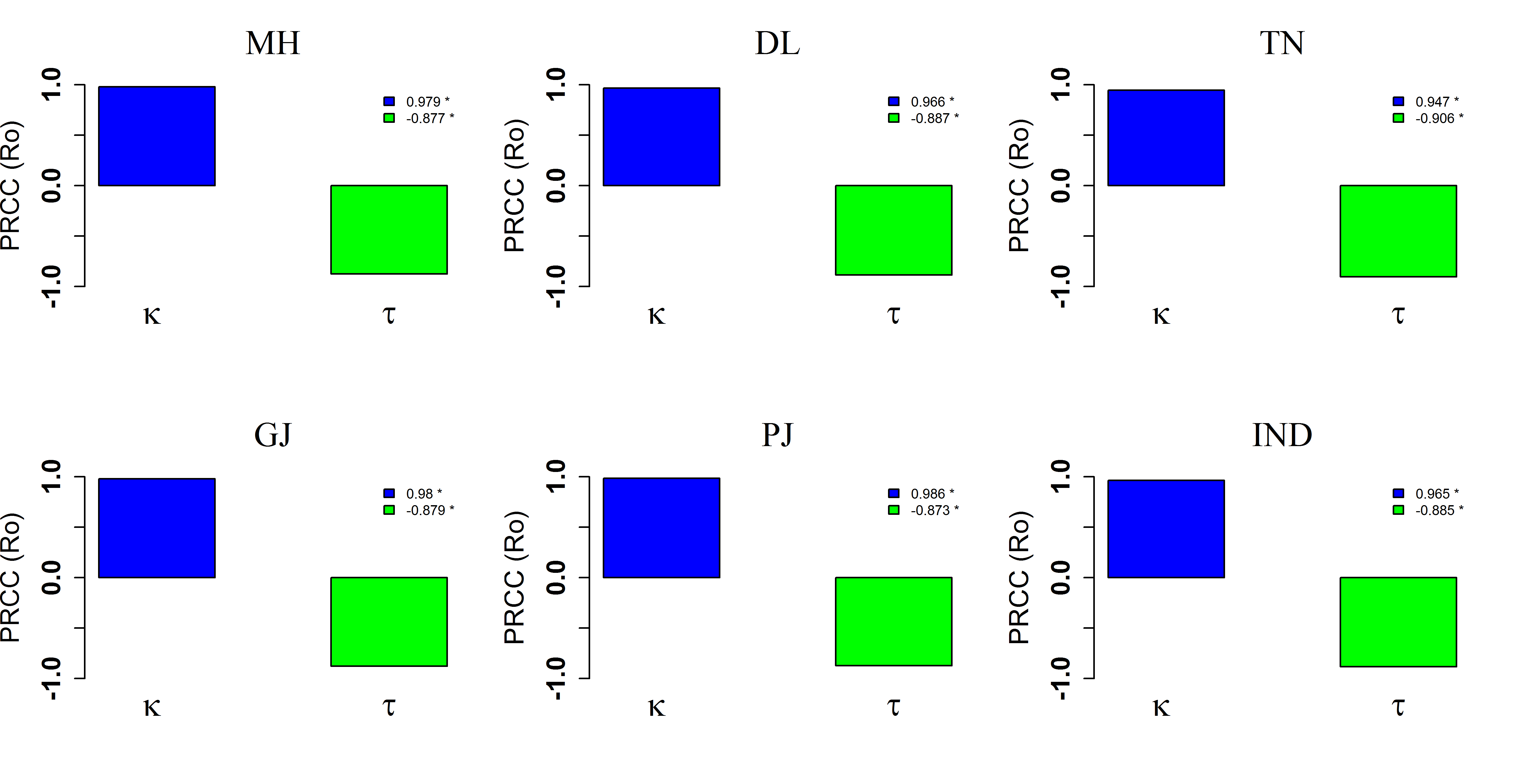}}
	\caption{Global sensitivity analysis of two epidemiologically measurable parameters namely $\kappa$: fraction of new infected that become symptomatic and $\tau$: average notification \& hospitalization rate of symptomatic COVID-19 infection, on $R_{0}$. The subscripts MH, DL, TN, GJ, PJ and IND, respectively are same as Fig~\ref{Fig:Model-fitting}.}
	\label{Fig:sensitivity-analysis-R0}
\end{figure}
\begin{figure}[ht]
	\captionsetup{width=1.1\textwidth}
	\includegraphics[width=1\textwidth]{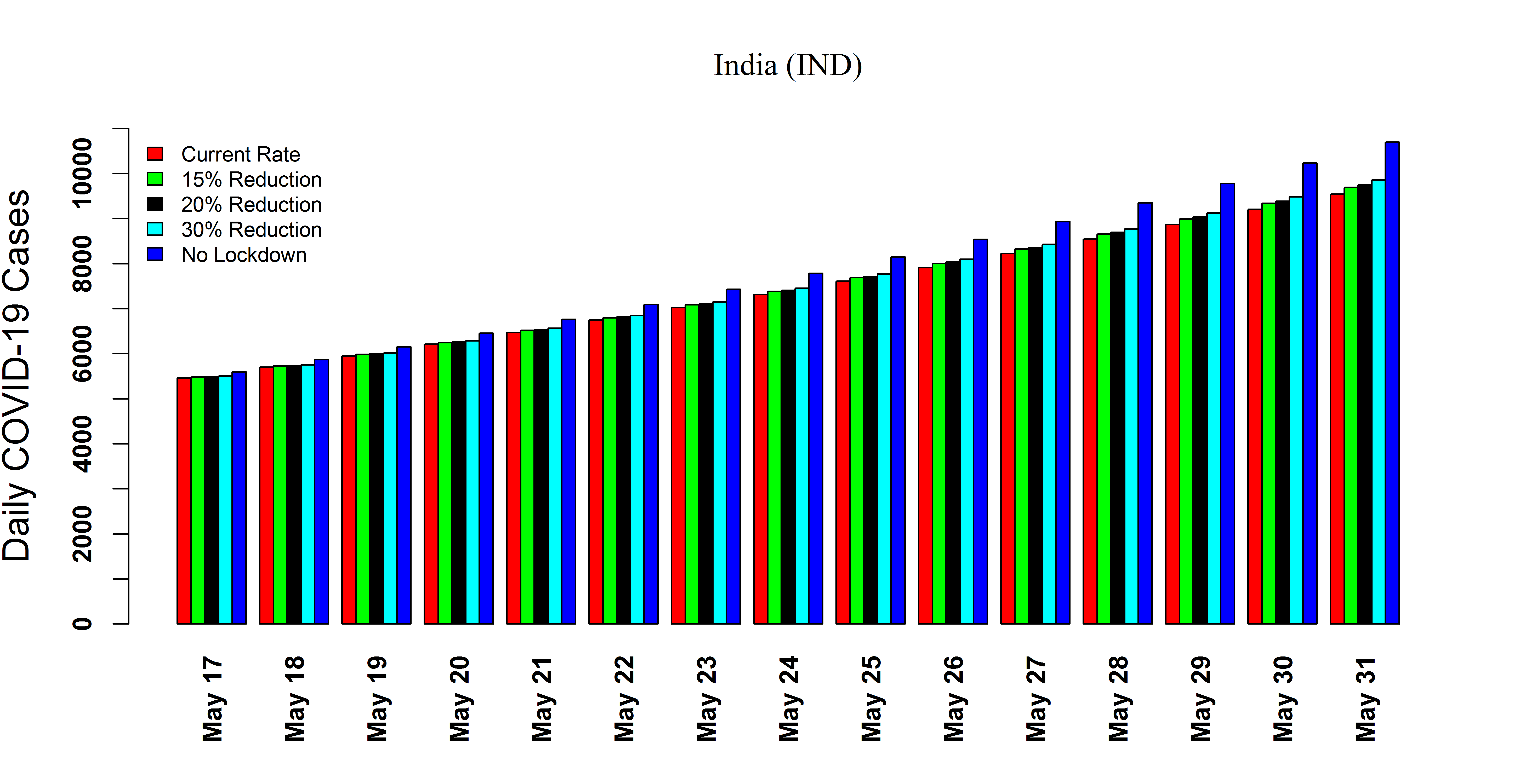}
	\caption{Ensemble model forecast for the daily notified COVID-19 cases in India during May 17, 2020 till May 31, 2020, under five different social distancing measure. Various legends are \textbf{Current Rate}: daily notified case projection using the estimated value of the lockdown rate~(see Table~\ref{Tab:estimated-parameters-Table}), \textbf{15\% Reduction}: daily notified case projection using 15\% reduction in the estimated value of the lockdown rate~(see Table~\ref{Tab:estimated-parameters-Table}), \textbf{20\% Reduction}: daily notified case projection using 20\% reduction in the estimated value of the lockdown rate~(see Table~\ref{Tab:estimated-parameters-Table}), \textbf{30\% Reduction}: daily notified case projection using 30\% reduction in the estimated value of the lockdown rate~(see Table~\ref{Tab:estimated-parameters-Table}), and \textbf{No lockdown}: daily notified case projection based on no lockdown scenario, respectively. }
	\label{Fig:Prediction-cases-India}
\end{figure}

\begin{figure}[ht]
	\captionsetup{width=1.1\textwidth}
	\includegraphics[width=1\textwidth]{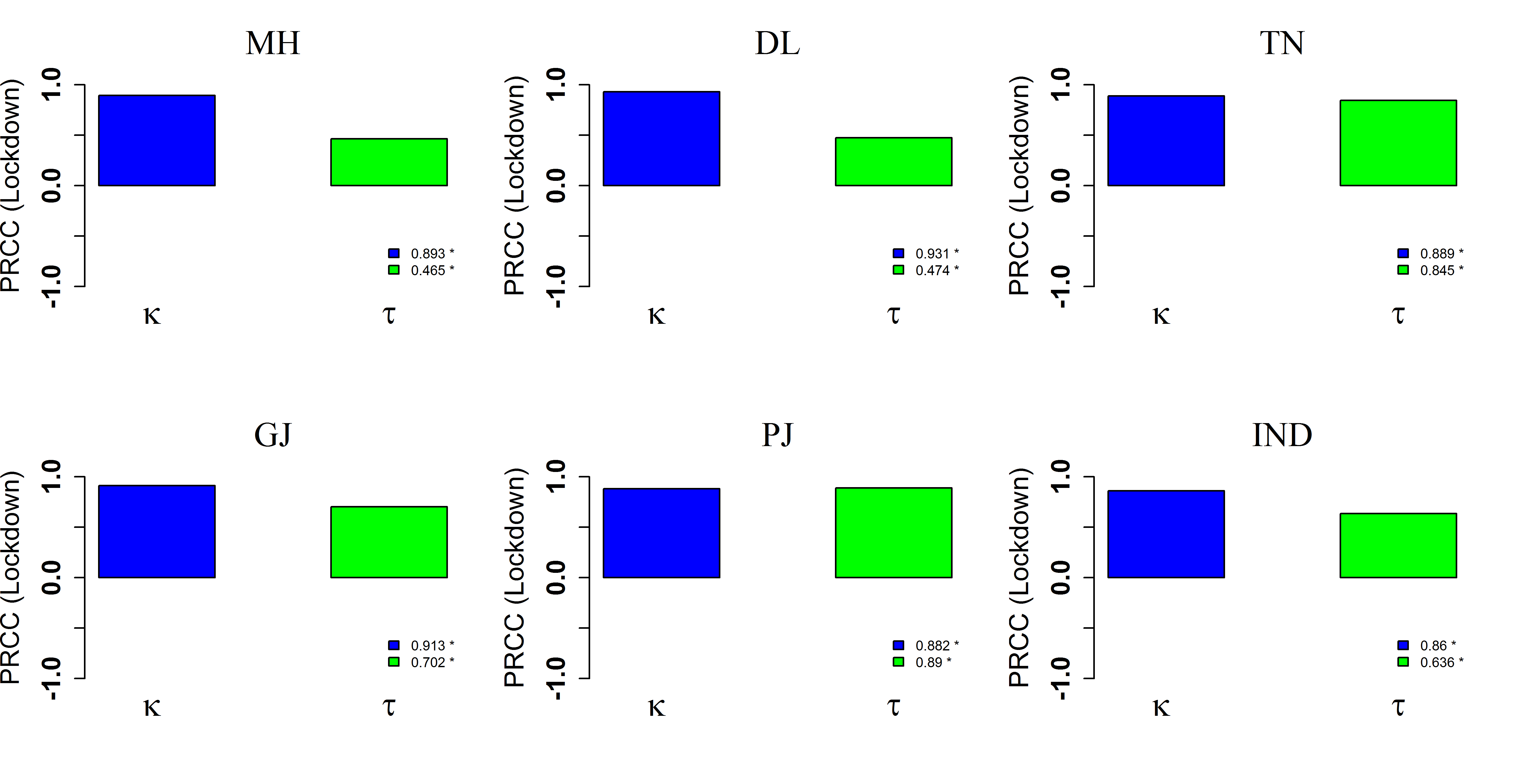}
	\caption{Global sensitivity analysis of two epidemiologically measurable parameters namely $\kappa$: fraction of new infected that become symptomatic and $\tau$: average notification \& hospitalization rate of symptomatic COVID-19 infection, on the lockdown effect. The effect of lockdown is measured as the difference between the total number of cases projected by our ensemble model with and without lockdown scenario during May 17, 2020 till May 31, 2020. The subscripts MH, DL, TN, GJ, PJ and IND, respectively are same as Fig~\ref{Fig:Model-fitting}.}
	\label{Fig:Sensitivity-lockdown-effect}
\end{figure}

\begin{figure}[ht]
	\captionsetup{width=1.1\textwidth}
	\includegraphics[width=1\textwidth]{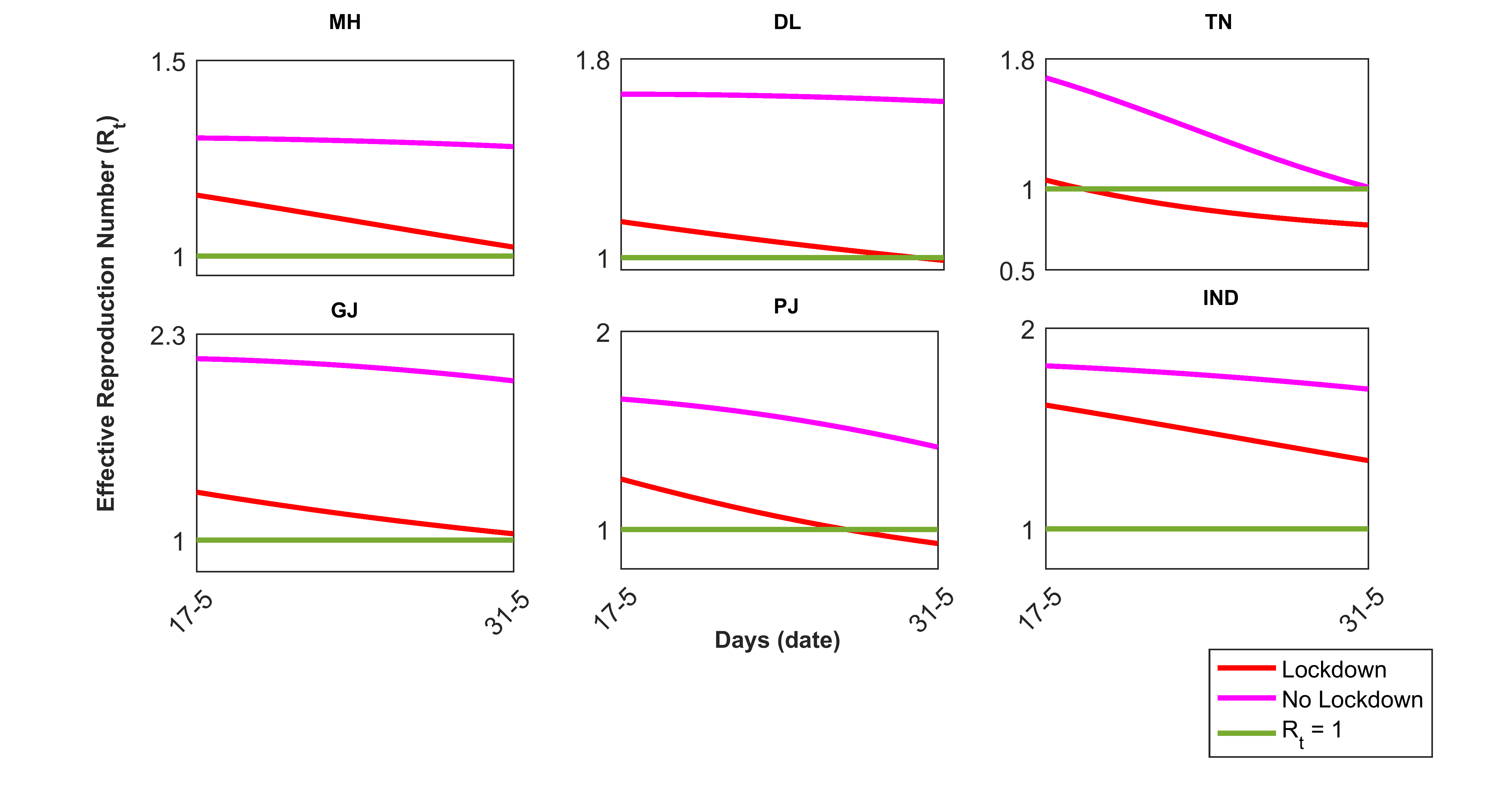}
	\caption{Effective reproduction number ($R_{t}$) for the period May 17, 2020 till May 31, 2020, in MH, DL, TN, GJ, PJ and IND, respectively. The subscripts MH, DL, TN, GJ, PJ and IND, respectively are same as Fig~\ref{Fig:Model-fitting}.}
	\label{Fig:Effective-reproduction-number}
\end{figure}

\clearpage
\begin{center}
\section*{\Large{\underline{Tables}}}
\end{center}

\begin{table}[ht]
	\captionsetup{font=normalsize}
	\captionsetup{width=1.1\textwidth}
	\begin{center}
		\caption{Parameters with their respective epidemiological explanation for the mechanistic mathematical model~(see Fig~\ref{Fig:Flow_India_covid}) for COVID-19.}
		\label{tab:mod1}
		\begin{tabular}{|c| p{8.2cm} p{3.4cm} p{2cm}}
			\hline\\
			\textbf{Parameters} & \textbf{Epidemiological Meaning} &  \textbf{Ranges} & \textbf{Reference}\\\hline\\
			 $N$  & Total population & Varies over different region & \cite{aadhaar20}\\
			$\Pi_{H}$ $=$ $\displaystyle \mu \times N$ & Average recruitment rate & Varies over different region & --\\
			$\displaystyle\frac{1}{\mu}$ & Average life expectancy at birth & Varies over states & \cite{Nitiayog2020}\\\\
			$\beta_1$ & Transmission rate of symptomatic infected & (0 - 200) $day^{-1}$  & Estimated \\\\
			$\rho$ & Reduction in COVID-19 transmission for Asymptomatic infected & 0 - 1 & Estimated \\\\
			$\displaystyle\frac{1}{\sigma}$ & COVID-19 incubation period & (2 - 14) $days$ &  Estimated \\\\
			$\displaystyle \kappa$ & Fraction of exposed population that become
			symptomatic infected & 0 - 1 & Estimated\\\\
			$\gamma_1$ & Recovery rate for asymptomatic
			infected & ($\gamma_{3}$ - 1) $day^{-1}$& Estimated\\\\
			$\gamma_2$ &  Recovery rate for symptomatic
			infected & ($\gamma_{3}$ - 1) $day^{-1}$ & Estimated \\\\
			$\tau$ & Average notification \& hospitalization rate of symptomatic infection 
			&  (0 - 1) $day^{-1}$ & Estimated\\\\
			$\delta$ & Average case fatality rate & Varies over states & \cite{indiacovid2020track}\\\\
			$\gamma_3$ & Average recovery rate for COVID-19 hospitalized \& notified infection & Varies over states & \cite{indiacovid2020track}\\\\
			$l$ & Average lockdown rate & (0 - 1) $day^{-1}$ & Estimated\\\\
			$\frac{1}{\omega}$ & lockdown period & 54 days & \cite{TimesofIndia54days}\\
			\hline
		\end{tabular}
	\end{center}
\end{table}

\begin{table}[ht]
	\captionsetup{font=normalsize}
	\captionsetup{width=1.1\textwidth}
	\tabcolsep 4.5pt
	\footnotesize	
	\centering
	\caption{Estimated parameter values of the mechanistic mathematical model~(see Fig~\ref{Fig:Flow_India_covid}). Respective row subscripts are same as Fig.~\ref{Fig:Model-fitting}. All data are given in the format~\textbf{Estimate (95\% CI)}.}\vspace{0.3cm}
	\begin{tabular}{p{2cm}|ccccccccc} \hline\\
		\textbf{Location} &  $\boldsymbol{\beta_{1}}$ & $\boldsymbol{\rho}$  & $\boldsymbol{\sigma}$  & $\boldsymbol{\kappa}$ & $\boldsymbol{\gamma_1}$ &  $\boldsymbol{\gamma_2}$  & $\boldsymbol{\tau \times 100}$ & $\boldsymbol{l\times 100}$ \\ \hline\\
		\textbf{MH}& \footnotesize{$\substack{5.58  \\  \\ (2.62 - 9.79)}$} & \footnotesize{$\substack{0.015  \\  \\ (0.01 - 0.15)}$} & \footnotesize{$\substack{0.39  \\  \\ (0.07 - 0.45)}$} & \footnotesize{$\substack{0.20  \\  \\ (0.11 - 0.53)}$} & \footnotesize{$\substack{0.63 \\  \\ (0.27 - 0.99)}$} & \footnotesize{$\substack{0.69  \\  \\ (0.22 - 0.98)}$} & \footnotesize{$\substack{24.36  \\  \\ (11.34 - 93.53)}$} & \footnotesize{$\substack{57.84  \\  \\ (11.77 - 86.19)}$}\\\\
		\hline\\
		\textbf{DL}& \footnotesize{$\substack{5.93  \\  \\ (0.78-6.30)}$} & \footnotesize{$\substack{0.13  \\  \\ (0.02 - 0.82)}$} & \footnotesize{$\substack{0.11  \\  \\ (0.07 - 0.47)}$} & \footnotesize{$\substack{0.16  \\  \\ (0.11 - 0.93)}$} & \footnotesize{$\substack{0.99  \\  \\ (0.31 - 0.99)}$} & \footnotesize{$\substack{0.47  \\  \\ (0.33 - 0.98)}$} & \footnotesize{$\substack{37.86  \\  \\ (12.48 - 98.27)}$} & \footnotesize{$\substack{87.03  \\  \\ (11.25 - 87.11)}$}\\\\
		\hline\\
		\textbf{TN}& \footnotesize{$\substack{1.47  \\  \\ (1.23-3.63)}$} & \footnotesize{$\substack{0.37  \\  \\ (0.18 - 0.91)}$} & \footnotesize{$\substack{0.30  \\  \\ (0.15-0.48)}$} & \footnotesize{$\substack{0.88  \\  \\ (0.03 - 0.91)}$} & \footnotesize{$\substack{0.84  \\  \\ (0.29 - 0.98)}$} & \footnotesize{$\substack{0.60  \\  \\ (0.30 - 0.98)}$} & \footnotesize{$\substack{0.56  \\  \\ (0.22 - 9.65)}$} & \footnotesize{$\substack{52.16  \\  \\ (12.38 - 88.92)}$}\\\\
		\hline\\
		\textbf{GJ}& \footnotesize{$\substack{10.73  \\  \\ (5.96 - 20.53)}$} & \footnotesize{$\substack{0.03  \\  \\ (0.002 - 0.15)}$} & \footnotesize{$\substack{0.10  \\  \\ (0.07 - 0.20)}$} & \footnotesize{$\substack{0.11  \\  \\ (0.10 - 0.15)}$} & \footnotesize{$\substack{0.24  \\  \\ (0.21 - 0.98)}$} & \footnotesize{$\substack{0.91  \\  \\ (0.66 - 0.99)}$} & \footnotesize{$\substack{16.79  \\  \\ (10.10 - 23.56)}$} & \footnotesize{$\substack{77.68  \\  \\ (37.15 - 89.36)}$}\\\\
		\hline\\
		\textbf{PJ}& \footnotesize{$\substack{1.63  \\  \\ (1.31-5.35)}$} & \footnotesize{$\substack{0.68  \\  \\ (0.016 - 0.95)}$} & \footnotesize{$\substack{0.26  \\  \\ (0.076-0.45)}$} & \footnotesize{$\substack{0.82  \\  \\ (0.11 - 0.90)}$} & \footnotesize{$\substack{0.43  \\  \\ (0.12 - 0.97)}$} & \footnotesize{$\substack{0.98  \\  \\ (0.24 - 0.99)}$} & \footnotesize{$\substack{2.11  \\  \\ (0.12 - 2.24)}$} & \footnotesize{$\substack{48.18  \\  \\ (12.50 - 87.76)}$}\\\\
		\hline\\
		\textbf{IND}& \footnotesize{$\substack{2.60  \\  \\ (1.04 - 2.92)}$} & \footnotesize{$\substack{0.02  \\  \\ (0.02 - 0.77)}$} & \footnotesize{$\substack{0.08  \\  \\ (0.07 - 0.14)}$} & \footnotesize{$\substack{0.62  \\  \\ (0.30 - 0.98)}$} & \footnotesize{$\substack{0.73  \\  \\ (0.31 - 0.98)}$} & \footnotesize{$\substack{0.79  \\  \\ (0.33 - 0.98)}$} & \footnotesize{$\substack{0.6  \\  \\ (0.2 - 0.9)}$} & \footnotesize{$\substack{10.41  \\  \\ (10.37 - 42.56)}$}\\\\
		\hline
	\end{tabular}
	\label{Tab:estimated-parameters-Table}
\end{table}

\begin{table}[ht]
	\captionsetup{font=normalsize}
	\tabcolsep 16pt	
	\centering
	\caption{Estimated values of the basic reproduction number ($R_{0}$) in MH, DL, TN, GJ, PJ and IND. Respective row subscripts are same as Fig~\ref{Fig:Model-fitting}. All data are given in the format \textbf{Estimate (95\% CI)}.}\vspace{0.3cm}
	
	\begin{tabular}{c|c} \hline\\

		\textbf{Location} & \textbf{Basic Reproduction Number ($\boldsymbol{R_0}$)}\\\hline
		\\
		\textbf{MH}& $\substack{1.3320  \\\\ (1.2369 - 2.3053)}$\\
		\\
		\textbf{DL}& $\substack{1.7569  \\  \\ (1.2409 - 2.2392)}$\\
		\\
		\textbf{TN}& $\substack{2.2169 \\  \\ (1.78 - 5.0374)}$ \\
		\\
		\textbf{GJ}& $\substack{2.3911 \\  \\ (1.6223 - 2.7436)}$ \\
		\\
		\textbf{PJ}& $\substack{1.8077 \\  \\ (1.7342 - 6.3960)}$ \\
		\\
		\textbf{IND}& $\substack{2.0337 \\  \\ (1.6645 - 2.3318)}$\\\\
		\hline
		
	\end{tabular}
	
	\label{Tab:estimated-R0-Table}
\end{table}

\begin{table}[ht]
	\captionsetup{font=normalsize}
	\captionsetup{width=1.1\textwidth}
	\tabcolsep 5pt
	\footnotesize	
	\centering
	\caption{Ensemble model forecast of the total COVID-19 cases during May 17, 2020 till May 31, 2020 in five different states and overall India under five different lockdown scenario. Respective row subscripts are same as Fig.~\ref{Fig:Model-fitting}. Different lockdown scenarios are \textbf{Current Rate}: cumulative case projection using the estimated value of the lockdown rate~(see Table~\ref{Tab:estimated-parameters-Table}), \textbf{15\% Reduction}: cumulative case projection using 15\% reduction in the estimated value of the lockdown rate~(see Table~\ref{Tab:estimated-parameters-Table}), \textbf{20\% Reduction}: cumulative case projection using 20\% reduction in the estimated value of the lockdown rate~(see Table~\ref{Tab:estimated-parameters-Table}), \textbf{30\% Reduction}: cumulative case projection using 30\% reduction in the estimated value of the lockdown rate~(see Table~\ref{Tab:estimated-parameters-Table}), and \textbf{No lockdown}: cumulative case projection based on no lockdown scenario, respectively. All data are provided in the format~\textbf{Estimate (95\% CI)}.}
\begin{tabular}{|p{2cm}|p{2.8cm}p{2.8cm} p{2.8cm} p{2.8cm} p{2.8cm}} \hline\\
		\textbf{Location} & \textbf{Current Rate} &\textbf{15\% Reduction}
		& \textbf{20\% Reduction}
		& \textbf{30\% Reduction} & \textbf{No lockdown} 
		\\\\ \hline\\
		\textbf{MH}& \footnotesize{$\substack{33138  \\  \\ (31910 - 34046) }$} & \footnotesize{$\substack{34602 \\  \\  (34430 - 34730)}$} & \footnotesize{$\substack{35144  \\  \\  (34983 - 35361)}$} & \footnotesize{$\substack{36318  \\  \\  (35532 - 37383)}$} & \footnotesize{$\substack{48635  \\  \\  (41285 - 58578)}$} \\\\
		\hline\\
		\textbf{DL}& \footnotesize{$\substack{6247  \\  \\ (5857 - 6811)}$} & \footnotesize{$\substack{6504  \\  \\ (6253 - 6866)}$} & \footnotesize{$\substack{6601 \\  \\ (6403 - 6887)}$} & \footnotesize{$\substack{6816 \\  \\ (6734 - 6933)}$} & \footnotesize{$\substack{9623  \\  \\ (7535 - 11068)}$} \\\\
		\hline\\
		\textbf{TN}& \footnotesize{$\substack{19344  \\  \\ (19319 - 21503)}$} & \footnotesize{$\substack{20958   \\  \\ (20946 - 21975)}$} & \footnotesize{$\substack{21598 \\  \\ (21591 - 22162)}$} & \footnotesize{$\substack{23076 \\  \\  (22562 - 23107)}$} & \footnotesize{$\substack{60390 \\  \\ (31726 - 62172)}$} \\\\
		\hline\\
		\textbf{GJ}& \footnotesize{$\substack{8227  \\  \\ (8128 - 9039)}$} & \footnotesize{$\substack{8414  \\  \\ (8273 - 9572)}$} & \footnotesize{$\substack{8485 \\  \\ (8328 - 9776)}$} & \footnotesize{$\substack{8646 \\  \\  (8452 - 10233)}$} & \footnotesize{$\substack{11208 \\  \\ (10433 - 17549)}$} \\\\
		\hline\\
		\textbf{PJ}& \footnotesize{$\substack{7427  \\  \\ (7128 - 7517)}$} & \footnotesize{$\substack{8188  \\  \\ (7702 - 8333)}$} & \footnotesize{$\substack{8480 \\  \\ (7924 - 8647)}$} & \footnotesize{$\substack{9139 \\  \\  (8421 - 9354)}$} & \footnotesize{$\substack{19715 \\  \\ (16406 - 20705)}$} \\\\
		\hline\\		
		\textbf{IND}& \footnotesize{$\substack{117645  \\  \\ (110999 - 122572)}$} & \footnotesize{$\substack{119204 \\  \\(111569 - 124863)}$} & \footnotesize{$\substack{119740 \\  \\(111765 - 125650)}$} & \footnotesize{$\substack{120831\\  \\ (112164 - 127255)}$} & \footnotesize{$\substack{128379 \\  \\(114923 - 138352)}$} \\\\
		\hline
	\end{tabular}
	\label{Tab:cases-preiction-Table}
\end{table}

\clearpage
\setcounter{equation}{0}
\setcounter{figure}{0} 
\setcounter{table}{0}
\renewcommand{\theequation}{S-\arabic{equation}}
\renewcommand{\thetable}{S\arabic{table}}
\renewcommand{\thefigure}{S\arabic{figure}}

\section{Supplementary Appendix}	

\subsection*{Positivity and boundedness of the solution for the Model \eqref{EQ:eqn 3.1}}
This subsection is provided to prove the positivity and boundedness of solutions of the system~\eqref{EQ:eqn 3.1} with initial conditions $(S(0), L(0), E(0),A(0),I(0),C(0),R(0))^T\in \mathbb{R}_{+0}^7$. We first state the following lemma.
\begin{lemma} \label{lma1} 
	Suppose $\Omega \subset \mathbb{R} \times \mathbb{C}^n$ is open, $f_i \in C(\Omega, \mathbb{R}), i=1,2,3,...,n$. If $f_i|_{x_i(t)=0,X_t \in \mathbb{C}_{+0}^n}\geq 0$, $X_t=(x_{1t},x_{2t},.....,x_{1n})^T, i=1,2,3,....,n$, then $\mathbb{C}_{+0}^n\lbrace \phi=(\phi_1,.....,\phi_n):\phi \in \mathbb{C}([-\tau,0],\mathbb{R}_{+0}^n)\rbrace$ is the invariant domain of the following equations
	\begin{align*}
	\frac{dx_i(t)}{dt}=f_i(t,X_t), t\geq \sigma, i=1,2,3,...,n.
	\end{align*}
	where $\mathbb{R}_{+0}^n=\lbrace (x_1,....x_n): x_i\geq 0, i=1,....,n \rbrace$ \cite{yang1996permanence}.
\end{lemma}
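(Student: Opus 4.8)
The plan is to argue by contradiction, using the standard tangency (Nagumo-type) criterion for retarded functional differential equations together with an $\epsilon$-perturbation that upgrades the weak sign condition $f_i\ge 0$ on the face $\{x_i=0\}$ to a strict one. First I would record that, since $f_i\in C(\Omega,\mathbb{R})$ (and with the local Lipschitz regularity implicit in the model), for each $\phi\in\mathbb{C}_{+0}^n$ there is a unique noncontinuable solution $X_t$ with $X_0=\phi$ on a maximal interval $[-\tau,T)$; the assertion to be proved is that $X_t\in\mathbb{C}_{+0}^n$ for all $t\in[0,T)$, i.e.\ $x_i(t)\ge 0$ for every $i$.

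For each $\epsilon>0$ I would introduce the perturbed system
\[
\dot{x}_i^\epsilon(t)=f_i(t,X_t^\epsilon)+\epsilon,\qquad X_0^\epsilon=\phi+\epsilon\mathbf{1},
\]
where $\mathbf{1}=(1,\dots,1)^T$, and let $X^\epsilon$ be its solution on its maximal interval $[-\tau,T_\epsilon)$. Set $t_1=\inf\{t\in(0,T_\epsilon):x_i^\epsilon(t)=0\text{ for some }i\}$ and suppose, for contradiction, that $t_1<T_\epsilon$. On $[0,t_1)$ every component is strictly positive (it starts at $\phi_i(0)+\epsilon\ge\epsilon>0$), and on $[-\tau,0]$ the segment equals $\phi+\epsilon\mathbf 1\ge 0$; hence the history segment $X_{t_1}^\epsilon$, whose domain is $[t_1-\tau,t_1]$, lies in $\mathbb{C}_{+0}^n$. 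Choosing $j$ with $x_j^\epsilon(t_1)=0$, the boundary hypothesis gives $\dot{x}_j^\epsilon(t_1)=f_j(t_1,X_{t_1}^\epsilon)+\epsilon\ge\epsilon>0$; but $x_j^\epsilon(t_1)=0$ together with $x_j^\epsilon(t)>0$ for $t$ slightly below $t_1$ forces the left derivative of $x_j^\epsilon$ at $t_1$ to be $\le 0$, a contradiction. Therefore $x_i^\epsilon(t)>0$ for all $i$ and all $t$ in the maximal interval of the perturbed solution.

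Finally I would let $\epsilon\to 0^+$ and invoke the continuous dependence theorem for retarded FDEs on the right-hand side and on the initial segment: $X^\epsilon\to X$ uniformly on every compact subinterval of $[-\tau,T)$ (shrinking $T$ if needed to remain in a common interval of existence), so passing to the limit in $x_i^\epsilon(t)>0$ yields $x_i(t)\ge 0$, hence $X_t\in\mathbb{C}_{+0}^n$ on $[0,T)$, which is the claimed invariance.

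I expect the main obstacle to be the careful bookkeeping in two places: (i) showing that the first-zero time $t_1$ is well defined and that $X_{t_1}^\epsilon\in\mathbb{C}_{+0}^n$, which requires that positivity on $[0,t_1)$ together with $\phi\ge 0$ actually covers the entire delay window $[t_1-\tau,t_1]$ regardless of the size of $\tau$ relative to $t_1$; and (ii) the uniform convergence $X^\epsilon\to X$ on compact sets, which relies on the standard continuous-dependence result and on a uniform a priori bound for the $X^\epsilon$ near $t=0$. The tangency step itself is then routine, the only genuine subtlety being that it fails for the unperturbed system precisely because $\dot x_j=0$ is compatible with immediate negativity — which is exactly why the additive $\epsilon$ is introduced.
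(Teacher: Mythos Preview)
The paper does not prove this lemma at all: it is simply quoted from \cite{yang1996permanence} and used as a black box to verify positivity of the specific COVID-19 model in Proposition~\ref{Prop-1}. So there is no ``paper's own proof'' to compare against; your write-up is supplying what the authors deliberately outsourced.

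That said, your argument is the standard Nagumo/Bony--Brezis tangency proof for positive invariance of the closed nonnegative cone, and the outline is sound. Two points worth tightening. First, the lemma as stated only assumes $f_i\in C(\Omega,\mathbb{R})$, yet you invoke uniqueness and continuous dependence on data, both of which require more (local Lipschitz in the state/history variable); you flag this as ``implicit in the model,'' which is fine for the application but is an added hypothesis relative to the lemma's literal statement. If you want the lemma exactly as written, the perturbation argument still gives invariance for \emph{some} solution through $\phi$ (via a compactness/limit argument on the $X^\epsilon$), but not necessarily for every solution in the non-unique case. Second, in step (i) you should note explicitly the case $t_1\le\tau$: then the segment $X_{t_1}^\epsilon$ restricted to $[t_1-\tau,0]$ equals $\phi+\epsilon\mathbf{1}$, which is $\ge\epsilon>0$, so the history is indeed in $\mathbb{C}_{+0}^n$ as you claim; you gesture at this but it deserves one line. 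With those caveats, the proof goes through.
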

\begin{proposition}\label{Prop-1}
	The system \eqref{EQ:eqn 3.1} is invariant in $\mathbb{R}_{+0}^7$.
\end{proposition}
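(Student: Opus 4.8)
The plan is to apply Lemma~\ref{lma1} directly, in the degenerate (zero-delay) case $\tau=0$, for which $\mathbb{C}([-\tau,0],\mathbb{R}_{+0}^n)$ collapses to $\mathbb{R}_{+0}^n$ and $X_t$ to the state vector itself. Write $X=(x_1,\dots,x_7)=(S,L,E,A,I,C,R)$ and let $f_1,\dots,f_7$ denote the right-hand sides of~\eqref{EQ:eqn 3.1} in that order. By the lemma it then suffices to verify, for each $i$, the sign condition $f_i\ge 0$ whenever $x_i=0$ and all the remaining coordinates are nonnegative; invariance of $\mathbb{R}_{+0}^7$ follows at once.

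First I would dispose of the six coordinates not involving the incidence denominator. Putting $L=0$ in the second equation leaves $f_2=lS\ge0$; putting $A=0$ in the fourth equation leaves $f_4=(1-\kappa)\sigma E\ge0$; putting $I=0$ in the fifth leaves $f_5=\kappa\sigma E\ge0$; putting $C=0$ in the sixth leaves $f_6=\tau I\ge0$; and putting $R=0$ in the seventh leaves $f_7=\gamma_1 A+\gamma_2 I+\gamma_3 C\ge0$. For the $S$-coordinate, setting $S=0$ annihilates both incidence terms together with the $-\mu S$ and $-lS$ losses, leaving $f_1=\Pi_H+\omega L\ge0$, using $\Pi_H=\mu N>0$ and $L\ge0$. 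All of these are immediate.

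The only coordinate that needs a moment of care is $E$, because of the factor $1/(N-L-C)$. Here I would first record the identity $N-L-C=S+E+A+I+R$, so on the nonnegative orthant this quantity is itself nonnegative and is strictly positive unless $S,E,A,I,R$ all vanish. Setting $E=0$, the numerator of $f_3$ is $\beta_1 IS+\rho\beta_1 AS\ge0$; if $S>0$ then $N-L-C\ge S>0$ and hence $f_3\ge0$, while if $S=0$ the numerator vanishes. Moreover the incidence is bounded by $\beta_1 I+\rho\beta_1 A$ (since $S/(S+E+A+I+R)\le1$), so it extends continuously by $0$ at every boundary point where $S+E+A+I+R=0$; thus the vector field is continuous on $\mathbb{R}_{+0}^7$ (and $C^1$ on the open set $\Omega=\{X:\ S+E+A+I+R>0\}$, equivalently $N>L+C$), and all seven hypotheses of Lemma~\ref{lma1} hold.

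I expect the only real obstacle to be this well-definedness issue at the boundary of the orthant: one must either restrict attention to the biologically meaningful region $N>L+C$ on which the incidence is smooth, or argue that a trajectory with $N(0)>0$ can never reach the singular set, which can be done by a comparison estimate of the form $\tfrac{d}{dt}(S+E+A+I+R)\ge \Pi_H-(l+\tau+\mu)(S+E+A+I+R)$, forcing $S+E+A+I+R$ to stay bounded below by a positive quantity. Once this is disposed of, the proof is just the routine sign-check above, and the conclusion $\mathbb{R}_{+0}^7$ is invariant for~\eqref{EQ:eqn 3.1} follows from Lemma~\ref{lma1}.
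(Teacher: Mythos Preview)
Your proposal is correct and follows essentially the same route as the paper: verify the tangent-cone sign condition $f_i\ge 0$ on each face $\{x_i=0\}\cap\mathbb{R}_{+0}^7$ and invoke Lemma~\ref{lma1}. In fact you are more careful than the paper, which simply writes down the seven inequalities and applies the lemma without commenting on the possible vanishing of the incidence denominator $N-L-C$; your additional remarks on the identity $N-L-C=S+E+A+I+R$ and the comparison bound are a welcome (if optional) strengthening.
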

\begin{proof}
	By re-writing the system \eqref{EQ:eqn 3.1} we have:
	\begin{eqnarray}
	\frac{dX}{dt} & =B(X(t)), X(0)=X_0\geq 0
	\label{EQ:eqn A.1}
	\end{eqnarray}
	$ B(X(t))=(B_1(X),B_1(X),...,B_7(X))^T$\\
	We note that
	\begin{align*}
	\frac{dS}{dt}|_{S=0}&=\Pi_H + \omega L > 0,
	\frac{dL}{dt}|_{L=0}= l S \geq 0, 
	\frac{dE}{dt}|_{E=0}=\frac{\beta_1 S( I+\rho A)}{N-C}\geq 0,\\  
	\frac{dA}{dt}|_{A=0}&=(1-\kappa)\sigma E\geq 0, 
	\frac{dI}{dt}|_{I=0}=\kappa \sigma E\geq 0, \frac{dC}{dt}|_{C=0}=\tau I\geq 0,\\
	\frac{dR}{dt}|_{R=0}&=\gamma_1 A+\gamma_2 I+\gamma_3 C \geq 0.
	\end{align*}
	Then it follows from the Lemma \ref{lma1} that $\mathbb{R}_{+0}^7$ is an invariant set for the COVID-19 system~\eqref{EQ:eqn 3.1} with lockdown.
\end{proof}
\begin{corollary}
	The system \eqref{EQ:eqn 2.1} is invariant in $\mathbb{R}_{+0}^6$.
\end{corollary}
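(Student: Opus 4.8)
The plan is to reproduce the proof of Proposition~\ref{Prop-1} verbatim, now with $n=6$ in place of $n=7$. First I would rewrite the system~\eqref{EQ:eqn 2.1} in the compact form $\frac{dX}{dt}=B(X(t))$, $X(0)=X_0\ge 0$, where $X=(S,E,A,I,C,R)^T$ and $B=(B_1,\dots,B_6)^T$ collects the six right-hand sides. Then I would evaluate each component of $B$ on the corresponding coordinate hyperplane and check nonnegativity: $\frac{dS}{dt}|_{S=0}=\Pi_H>0$; $\frac{dE}{dt}|_{E=0}=\frac{\beta_1 S(I+\rho A)}{N-C}\ge 0$; $\frac{dA}{dt}|_{A=0}=(1-\kappa)\sigma E\ge 0$; $\frac{dI}{dt}|_{I=0}=\kappa\sigma E\ge 0$; $\frac{dC}{dt}|_{C=0}=\tau I\ge 0$; and $\frac{dR}{dt}|_{R=0}=\gamma_1 A+\gamma_2 I+\gamma_3 C\ge 0$. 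Since all the rate constants $\Pi_H,\beta_1,\rho,\sigma,\kappa,\gamma_1,\gamma_2,\gamma_3,\tau$ are nonnegative, each of these expressions is nonnegative whenever the remaining coordinates lie in $\mathbb{R}_{+0}^6$, which is precisely the hypothesis $f_i|_{x_i(t)=0,\,X_t\in\mathbb{C}_{+0}^n}\ge 0$ of Lemma~\ref{lma1}. Invoking Lemma~\ref{lma1} with $n=6$ then yields that $\mathbb{R}_{+0}^6$ is an invariant set for~\eqref{EQ:eqn 2.1}.

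The only genuine subtlety, exactly as in Proposition~\ref{Prop-1}, is that the incidence terms carry the factor $1/(N-C)$, so one must know the vector field is well defined along the relevant trajectories, i.e. that $N-C>0$ there; this is where the boundedness half of the argument is used, by first confining solutions to the compact region on which the total population is bounded above (hence $C<N$) before applying Lemma~\ref{lma1}. Apart from bookkeeping this factor, I expect no obstacle at all: the computation is a line-by-line specialization of the preceding proof, with the $L$-equation and the $\omega L$ feedback simply deleted and $N-L-C$ replaced by $N-C$.

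As an alternative I would mention that the corollary also follows immediately from Proposition~\ref{Prop-1} by an embedding argument: setting $l=0$ in~\eqref{EQ:eqn 3.1} turns the lockdown equation into $\dot L=-(\mu+\omega)L$, so the hyperplane $\{L=0\}$ is invariant, and on it~\eqref{EQ:eqn 3.1} reduces exactly to~\eqref{EQ:eqn 2.1} (since then $N-L-C=N-C$ and $\omega L=0$); the invariance of $\mathbb{R}_{+0}^6$ is thus the restriction of the invariance of $\mathbb{R}_{+0}^7$ to this invariant subspace. I would keep the direct Lemma~\ref{lma1} argument as the main proof and relegate this observation to a remark.
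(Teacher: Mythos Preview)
Your proposal is correct and matches the paper's approach: the paper's proof consists of the single line ``Proceeding as proposition~\ref{Prop-1}, we can easily show that $\mathbb{R}_{+0}^6$ is an invariant set,'' and your main argument is precisely that specialization carried out in full. Your embedding remark is a nice extra but not in the paper.
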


\begin{proof}
	Proceeding as proposition~\ref*{Prop-1}, we can easily show that $\mathbb{R}_{+0}^6$ is an invariant set for the COVID-19 system~\eqref{EQ:eqn 2.1} without lockdown.
\end{proof}

\begin{lemma}\label{lemma2}
	The system \eqref{EQ:eqn 3.1} is bounded in the region\\ $\Omega=\lbrace(S, L, E,A,I,C,R)\in \mathbb{R}_{+0}^7|S+L+E+A+I+C+R\leq \frac{\Pi_H}{\mu}\rbrace$
\end{lemma}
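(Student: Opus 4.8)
The plan is to control the total population $N(t):=S(t)+L(t)+E(t)+A(t)+I(t)+C(t)+R(t)$ by deriving a single scalar differential inequality for it and then invoking a standard comparison argument.

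First I would add the seven equations of system~\eqref{EQ:eqn 3.1}. Every internal transfer term is arranged to appear once with a $+$ sign and once with a $-$ sign and therefore cancels: the incidence terms $\beta_1 IS/(N-L-C)$ and $\rho\beta_1 AS/(N-L-C)$ cancel between the $S$- and $E$-equations; the lockdown exchange $\omega L$ and the flux $lS$ cancel between the $S$- and $L$-equations; the incubation flux $\sigma E$ cancels against $(1-\kappa)\sigma E+\kappa\sigma E$; and $\gamma_1 A$, $\gamma_2 I$, $\tau I$, $\gamma_3 C$ each cancel against their counterpart in the $C$- or $R$-equation. Only the demographic terms and the disease-death term survive, giving
\begin{align*}
\frac{dN}{dt}=\Pi_H-\mu N-\delta C .
\end{align*}

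Next, Proposition~\ref{Prop-1} guarantees that the solution remains in $\mathbb{R}_{+0}^7$, so in particular $C(t)\ge 0$ and hence $\frac{dN}{dt}\le \Pi_H-\mu N$. Applying the comparison principle for scalar linear ODEs — equivalently, multiplying by the integrating factor $e^{\mu t}$ and integrating — yields
\begin{align*}
N(t)\le \frac{\Pi_H}{\mu}+\Bigl(N(0)-\frac{\Pi_H}{\mu}\Bigr)e^{-\mu t},\qquad t\ge 0 .
\end{align*}
Thus $N(0)\le \Pi_H/\mu$ forces $N(t)\le \Pi_H/\mu$ for all $t\ge 0$, and in general $\limsup_{t\to\infty}N(t)\le \Pi_H/\mu$. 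Since each of $S,L,E,A,I,C,R$ is nonnegative and bounded above by $N(t)$, all components are bounded, so $\Omega$ is positively invariant and globally attracting for~\eqref{EQ:eqn 3.1}; the analogous claim for~\eqref{EQ:eqn 2.1} follows by setting $L\equiv 0$.

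I do not expect a genuine obstacle here: this is the classical ``sum the equations'' estimate for compartmental epidemic models. The only delicate point worth flagging is well-posedness of the flow near the boundary of $\Omega$, since the vector field of~\eqref{EQ:eqn 3.1} divides by $N-L-C=S+E+A+I+R$; using the positivity already established together with $\frac{dS}{dt}|_{S=0}=\Pi_H+\omega L>0$ one checks that this denominator stays positive along solutions, so the right-hand side is well defined on $\Omega$ and the comparison argument is legitimate.
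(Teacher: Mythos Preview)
Your argument is correct and follows exactly the paper's own route: sum the compartments to obtain $\dfrac{dN}{dt}=\Pi_H-\mu N-\delta C\le \Pi_H-\mu N$ and conclude $\limsup_{t\to\infty}N(t)\le \Pi_H/\mu$. Your write-up is simply more detailed (explicit cancellation bookkeeping, the integrating-factor form of the comparison, and the remark on the denominator $N-L-C$), but the underlying proof is the same.
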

\begin{proof}
	We have from the system \eqref{EQ:eqn 3.1}:
	\begin{align*}
	&\frac{dN}{dt}=\Pi_H-\mu N-\delta C\leq \Pi_H-\mu N\\
	& \Longrightarrow \lim\limits_{t\rightarrow \infty}sup N(t)\leq \frac{\Pi_H}{\mu}
	\end{align*}
	Hence the system \eqref{EQ:eqn 3.1} is bounded.
\end{proof}

\begin{corollary}
	The system \eqref{EQ:eqn 2.1} is bounded in the region\\ $\Omega^{*}=\lbrace(S, E,A,I,C,R)\in \mathbb{R}_{+0}^6|S+E+A+I+C+R\leq \frac{\Pi_H}{\mu}\rbrace$
\end{corollary}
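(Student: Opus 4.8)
The plan is to follow the same route as the proof of Lemma~\ref{lemma2}, exploiting the fact that system~\eqref{EQ:eqn 2.1} is nothing but system~\eqref{EQ:eqn 3.1} with the lockdown compartment $L$ removed (equivalently $l=0$, $\omega=0$). First I would set $N=S+E+A+I+C+R$ and add the six equations of~\eqref{EQ:eqn 2.1} term by term. The two nonlinear incidence terms $\frac{\beta_1 I S}{N-C}$ and $\frac{\rho\beta_1 A S}{N-C}$ occur with opposite sign in $\dot S$ and $\dot E$ and therefore cancel; the exposed outflow $\sigma E$ reappears as $(1-\kappa)\sigma E$ in $\dot A$ and $\kappa\sigma E$ in $\dot I$, whose sum is $\sigma E$; and each of $\gamma_1 A$, $\gamma_2 I$, $\tau I$, $\gamma_3 C$ is matched by an equal inflow elsewhere in the system. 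What remains is
\begin{align*}
\frac{dN}{dt}=\Pi_H-\mu N-\delta C\le \Pi_H-\mu N,
\end{align*}
where the inequality uses $C\ge 0$, which is valid on $\mathbb{R}_{+0}^6$ by the positivity corollary established just above.

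Next I would invoke the elementary comparison argument for the linear differential inequality $\dot N\le \Pi_H-\mu N$: this yields $\limsup_{t\to\infty}N(t)\le \Pi_H/\mu$, and in addition $N(0)\le \Pi_H/\mu$ implies $N(t)\le \Pi_H/\mu$ for all $t\ge 0$. Consequently the region $\Omega^{*}$ is positively invariant and globally attracting within $\mathbb{R}_{+0}^6$, which is precisely the boundedness assertion.

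I do not anticipate a genuine obstacle here, since the statement is a direct analogue of Lemma~\ref{lemma2}. The only points requiring a moment of care are the term-by-term bookkeeping in the summation — in particular noting that $\kappa+(1-\kappa)=1$ so that the $\sigma E$ contributions cancel exactly and that the two incidence terms drop out — and the observation that the comparison step needs the nonnegativity of $C$, which is why the result is phrased as a corollary of the positivity statement rather than proved in isolation.
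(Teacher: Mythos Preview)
Your proposal is correct and follows exactly the route the paper indicates: the paper's own proof simply says ``proceeding same as lemma~\ref{lemma2}, we can easily show that the system~\eqref{EQ:eqn 2.1} is bounded in $\Omega^{*}$,'' and you have spelled out precisely those details (summing the equations, cancelling the incidence and transfer terms, and applying the comparison argument to $\dot N\le \Pi_H-\mu N$). There is nothing to add or correct.
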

\begin{proof}
	proceeding same as lemma~\ref*{lemma2}, we can easily show that the system~\eqref{EQ:eqn 2.1} is bounded in $\Omega^{*}$.
\end{proof}
\subsection*{Local stability of disease-free equilibrium (DFE)}
The DFE of the model \eqref{EQ:eqn 3.1} is provided as follows:
\begin{align*}
\varepsilon_0&=(S^0, L^{0}, E^0, A^0, I^0, C^0, R^0)\\
&=\Big(\frac{\Pi_H (\mu + \omega)}{\mu \left(\mu + \omega + l \right) },\frac{\Pi_H l}{\mu \left(\mu + \omega + l \right) }, 0, 0, 0, 0, 0\Big)
\end{align*}
The local stability of $\varepsilon_0$ can be established for the COVID-19 system~\eqref{EQ:eqn 3.1} by using the next generation operator method. Using the notation in \cite{van2002reproduction}, the matrices $F$ for the new infection and $V$ for the transition terms are given, respectively, by
\begin{align*}
F&=\begin{pmatrix}
0 & \rho\beta_1 &  \beta_1 & 0 \\
0&0 & 0 &0  \\
0 &0 & 0 &0  \\
0 &0&0 &0  \\
\end{pmatrix},\\
V&=\begin{pmatrix}
\mu+\sigma &0 & 0 &0 \\
-(1-\kappa)\sigma &\gamma_1+\mu & 0 &0 \\
-\kappa \sigma &0 & \gamma_2+\tau+\mu &0 \\
0 &0&-\tau  &\delta+\gamma_3+\mu\\
\end{pmatrix}.
\end{align*}
It follows that the basic reproduction number \cite{hethcote2000mathematics}, denoted by $R_0=\Phi(FV^{-1})$, where $\Phi$ is the spectral radius, is given by
\begin{align*}
R_0=\frac{\beta_1\kappa \sigma}{(\mu+\sigma)(\gamma_2+\tau+\mu)}+\frac{\rho \beta_1(1-\kappa)\sigma}{(\mu+\sigma)(\gamma_1+\mu)}
\end{align*}
Using Theorem 2 in \cite{van2002reproduction}, the following result is established.
\begin{lemma}	\label{EQ:eqn L.1}
	The DFE, $\varepsilon_0$, of the model \eqref{EQ:eqn 3.1} is locally-asymptotically stable (LAS) if $R_0<1$, and unstable if $R_0>1$.
\end{lemma}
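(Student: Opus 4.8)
The plan is to show that the infected subsystem of \eqref{EQ:eqn 3.1}, together with the decomposition into the matrices $F$ and $V$ displayed above, fits the hypotheses of the next‑generation framework of \cite{van2002reproduction}, and then to quote Theorem~2 of that reference directly. First I would relabel the seven compartments so that the infected ones $(E,A,I,C)$ precede the uninfected ones $(S,L,R)$ and write the vector field of \eqref{EQ:eqn 3.1} as $\mathcal F(x)-\mathcal V(x)$, with $\mathcal F$ carrying only the new‑infection term $\frac{\beta_1 I S+\rho\beta_1 A S}{N-L-C}$ in the $E$‑slot (zero in every other slot) and $\mathcal V$ carrying all remaining transfer terms. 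I would then verify the standing assumptions (A1)--(A5) of \cite{van2002reproduction}: entrywise nonnegativity of $\mathcal F$ and of the inflow part of $\mathcal V$, the vanishing of the outflow part of $\mathcal V_i$ whenever $x_i=0$ (already visible from the sign computations in the proof of Proposition~\ref{Prop-1}), the fact that $\mathcal F$ vanishes on the disease‑free subspace $\{E=A=I=C=0\}$, and invariance of that subspace. One mild point to record is that the incidence is of ratio (frequency‑dependent) type; however $N-L-C=S+E+A+I+R$ equals the strictly positive constant $S^0$ at $\varepsilon_0$, so $\mathcal F$ is $C^1$ in a neighbourhood of $\varepsilon_0$ and the method of \cite{van2002reproduction} applies without modification.

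Next I would linearise \eqref{EQ:eqn 3.1} at $\varepsilon_0$. Because the denominator of the incidence takes the constant value $S^0$ there, the terms $\beta_1 I S/(N-L-C)$ and $\rho\beta_1 A S/(N-L-C)$ linearise simply to $\beta_1 I$ and $\rho\beta_1 A$, which reproduces exactly the matrices $F$ and $V$ written before the statement. Since $V$ is lower triangular with strictly positive diagonal $\mu+\sigma,\ \gamma_1+\mu,\ \gamma_2+\tau+\mu,\ \delta+\gamma_3+\mu$, it is a nonsingular $M$‑matrix, $V^{-1}\ge 0$ entrywise, and $R_0=\Phi(FV^{-1})$ is precisely the expression already derived. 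With (A1)--(A5) in hand, Theorem~2 of \cite{van2002reproduction} then yields that $\varepsilon_0$ is locally asymptotically stable when $R_0<1$ and unstable when $R_0>1$, which is the assertion of the lemma.

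The only step requiring genuine (if brief) verification, and hence the main obstacle, is assumption (A5): that the Jacobian of \eqref{EQ:eqn 3.1} at $\varepsilon_0$, restricted to the uninfected block $(S,L,R)$, is a stable matrix. Evaluating the derivatives at $\varepsilon_0$ (where $A=I=C=0$ annihilates the incidence contributions), this block is block lower triangular: the $R$‑row contributes the eigenvalue $-\mu$, while the $(S,L)$ subblock is $\left(\begin{smallmatrix}-(\mu+l)&\omega\\ l&-(\mu+\omega)\end{smallmatrix}\right)$, whose trace $-(2\mu+l+\omega)$ is negative and whose determinant $(\mu+l)(\mu+\omega)-\omega l=\mu(\mu+\omega+l)$ is positive, so both of its eigenvalues have negative real part. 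Hence (A5) holds, and the conclusion follows verbatim from \cite{van2002reproduction}.
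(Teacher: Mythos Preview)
Your proposal is correct and follows exactly the same route as the paper, which simply states that the result is established ``using Theorem~2 in \cite{van2002reproduction}'' after displaying $F$, $V$, and $R_0=\Phi(FV^{-1})$. You have merely filled in the hypothesis checks (A1)--(A5) and the stability of the uninfected block that the paper leaves implicit, so there is nothing further to add.
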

The threshold quantity, $R_0$ is the basic reproduction number of the disease \cite{hethcote2000mathematics,anderson1979population,anderson1992infectious}. This represent the average number of secondary cases generated by a infected person in a fully susceptible population. The epidemiological significance of \ref{EQ:eqn L.1} is that when $R_0$ is less than unity, a low influx of infected individuals into the population will not cause major outbreaks, and the disease would die out in time.

\subsection*{Global stability of DFE}
\begin{theorem}
	The DFE of the model \eqref{EQ:eqn 3.1} is globally asymptotically stable in $\Omega$ whenever $R_0 \leq 1$.
\end{theorem}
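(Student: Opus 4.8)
The plan is to build a linear Lyapunov function in the infected variables $E,A,I$ and then invoke LaSalle's invariance principle on the compact, positively invariant set $\Omega$, whose invariance and boundedness are supplied by Proposition~\ref{Prop-1} and Lemma~\ref{lemma2}.

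The first ingredient is the elementary bound $N-L-C=S+E+A+I+R\ge S\ge 0$, valid on $\Omega$, which gives $\frac{S}{N-L-C}\le 1$ wherever the vector field is defined; here the denominator stays positive along trajectories because $\dot S=\Pi_H+\omega L>0$ when $S=0$ and $\frac{d}{dt}(S+E)>0$ near $S=E=0$, so solutions starting from meaningful data never reach the singular set $N-L-C=0$. With this in hand I would take
\begin{equation*}
\mathcal{L}=E+\frac{\rho\beta_1}{\gamma_1+\mu}\,A+\frac{\beta_1}{\gamma_2+\tau+\mu}\,I,
\end{equation*}
a nonnegative function on $\Omega$ vanishing at $\varepsilon_0$, the weights being chosen precisely so that the $A$- and $I$-contributions telescope when we differentiate along \eqref{EQ:eqn 3.1}. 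A short computation then gives
\begin{align*}
\dot{\mathcal{L}}&=\beta_1(I+\rho A)\!\left(\frac{S}{N-L-C}-1\right)+(\mu+\sigma)(R_0-1)E\\
&=-\,\frac{\beta_1(I+\rho A)(E+A+I+R)}{N-L-C}+(\mu+\sigma)(R_0-1)E,
\end{align*}
so that $\dot{\mathcal{L}}\le 0$ on $\Omega$ whenever $R_0\le 1$.

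Next I would compute the largest invariant set $M\subseteq\{\dot{\mathcal{L}}=0\}$. When $R_0<1$, $\dot{\mathcal{L}}=0$ forces $E=0$, and on an invariant set $\dot E\equiv 0$ then yields $I+\rho A\equiv 0$; when $R_0=1$, $\dot{\mathcal{L}}=0$ forces $(I+\rho A)(E+A+I+R)=0$, which again gives $I\equiv 0$ and $\rho A\equiv 0$ on $M$. Propagating these identities through the remaining equations (using, when $\rho=0$, that the decaying mode $A(t)=A_0e^{-(\gamma_1+\mu)t}$ is not backward-bounded and hence absent from $M$) forces $E\equiv A\equiv I\equiv 0$, then $C\equiv 0$ and $R\equiv 0$; finally the restriction of the $(S,L)$-subsystem to $M$ is $\dot S=\Pi_H+\omega L-(\mu+l)S=0$, $\dot L=lS-(\mu+\omega)L=0$, whose only solution is $(S^0,L^0)$. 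Hence $M=\{\varepsilon_0\}$. Since $\Omega$ is compact and positively invariant, LaSalle's theorem gives that every solution in $\Omega$ converges to $\varepsilon_0$; together with the local asymptotic stability of Lemma~\ref{EQ:eqn L.1} this yields global asymptotic stability of the DFE in $\Omega$ for $R_0\le 1$ (the equality case $R_0=1$ being covered by the attractivity argument above combined with boundedness of solutions).

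The main obstacle I expect is the borderline case $R_0=1$: there the $E$-term in $\dot{\mathcal{L}}$ no longer contributes a strict decrease, so the convergence must be squeezed out of the factor $(I+\rho A)(E+A+I+R)$ inside the invariant-set analysis, and one must check carefully that nothing survives in $M$ besides $\varepsilon_0$. A lesser but genuine technical point is keeping the denominator $N-L-C$ bounded away from zero along the relevant trajectories, so that both the vector field and $\dot{\mathcal{L}}$ remain well defined throughout the argument.
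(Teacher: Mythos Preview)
Your argument is correct and follows the same overall strategy as the paper—a linear Lyapunov function in the infected variables combined with LaSalle's invariance principle—but the implementation differs in two places worth noting.

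First, the Lyapunov functions are not the same (nor proportional, except at $R_0=1$): the paper uses
\[
\mathcal{L}=\frac{\sigma(\kappa k_2+\rho(1-\kappa)k_3)}{k_1k_2}\,E+\frac{\rho k_3}{k_2}\,A+I,\qquad k_1=\mu+\sigma,\ k_2=\gamma_1+\mu,\ k_3=\gamma_2+\tau+\mu,
\]
and after invoking $S\le N-L-C$ obtains the inequality $\dot{\mathcal{L}}\le k_3(R_0-1)(I+\rho A)$. Your choice of weights produces instead the exact identity $\dot{\mathcal{L}}=-\beta_1(I+\rho A)(E+A+I+R)/(N-L-C)+(\mu+\sigma)(R_0-1)E$, which keeps an explicitly negative term and therefore gives a cleaner invariant-set analysis at the borderline $R_0=1$. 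Second, once LaSalle forces $(E,A,I)\to 0$, the paper deduces convergence of $C,R,S,L$ by comparison/limsup--liminf estimates, whereas you characterise the maximal invariant set $M$ directly via backward-unboundedness of the decaying linear modes; both routes are valid. One small correction: your line ``the restriction of the $(S,L)$-subsystem to $M$ is $\dot S=\cdots=0$, $\dot L=\cdots=0$'' is not literally right—on $M$ the pair $(S,L)$ still obeys the linear ODE $\dot S=\Pi_H+\omega L-(\mu+l)S$, $\dot L=lS-(\mu+\omega)L$, and the correct reason that $M$ reduces to $(S^0,L^0)$ is that this linear system has negative trace and positive determinant, so its only complete bounded trajectory is the equilibrium.
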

\begin{proof}
	Consider the following Lyapunov function
	\begin{align*}
	\mathcal{L}=\Big(\frac{\sigma(\kappa k_2+\rho(1-\kappa)k_3)}{k_1 k_2}\Big )E+\Big(\frac{\rho k_3}{k_2}\Big)A+I
	\end{align*}
	where $k_1=\mu+\sigma$, $k_2=\gamma_1+\mu$ and $k_3=\gamma_2+\tau+\mu$.\\
	We take the Lyapunov derivative with respect to $t$, 
	\begin{align*}
	\dot{\mathcal{L}}&=\Big(\frac{\sigma(\kappa k_2+\rho(1-\kappa)k_3)}{k_1 k_2}\Big)\dot{E}+\Big(\frac{\rho k_3}{k_2}\Big)\dot{A}+\dot{I}\\
	&=\frac{\sigma(\kappa k_2+\rho(1-\kappa)k_3)}{k_1 k_2}\Big[\frac{\beta_1 S (I+\rho A)}{N-L-C}-k_1 E\Big]+\frac{\rho k_3}{k_2}[(1-\kappa)\sigma E-k_2 A]+(\kappa \sigma E-k_3 I)\\
	&\leq \frac{\beta_1\sigma(\kappa k_2+\rho (1-\kappa)k_3)}{k_1 k_2} (I+\rho A)-\frac{\sigma(\kappa k_2+\rho (1-\kappa)k_3)}{k_2} E+\frac{\rho (1-\kappa)k_3\sigma}{k_2}E\\
	&-\rho k_3 A+\kappa \sigma E-k_3 I   \text{ (Since $S \leq N-L-C $ in $\Omega$)}\\
	&= \frac{\beta_1\sigma(\kappa k_2+\rho (1-\kappa)k_3)}{k_1 k_2} (I+\rho A)-\rho k_3 A -k_3 I\\
	&=\frac{\beta_1\sigma(\kappa k_2+\rho (1-\kappa)k_3)}{k_1 k_2 k_3} k_3(I+\rho A)-\rho k_3 A -k_3 I\\
	&\leq k_3(R_0-1)(I+\rho A)\leq 0, \text{  whenever $R_0\leq 1$.}
	\end{align*}
	Since all the variables and parameters of the model \eqref{EQ:eqn 3.1} are non-negative, it follows that $\dot{\mathcal{L}}\leq 0$ for $R_0\leq 1$ with $\dot{\mathcal{L}}=0$ at diseases free equilibrium. Hence, $\mathcal{L}$ is a Lyapunov function on $\Omega$. Therefore, followed by LaSalle’s Invariance Principle \cite{lasalle1976stability}, that
	\begin{equation}\label{EQ:eqn A.4}
	(E(t),A(t), I(t))\rightarrow (0,0,0) \text{ as } t \rightarrow \infty
	\end{equation}
	Since $\lim\limits_{t\rightarrow \infty}sup I(t)=0 $ (from \ref{EQ:eqn A.4}), it follows that, for sufficiently small $\epsilon>0$, there exist constants $B_1>0$ such that $\lim\limits_{t\rightarrow \infty}sup I(t)\leq \epsilon$ for all $t>B_1$.\\
	Hence, it follows from the sixth equation of the model \eqref{EQ:eqn 3.1} that, for $t> B_1$,
	\begin{align*}
	\frac{dC}{dt}\leq \tau \epsilon-k_4 C
	\end{align*}
	Therefore using comparison theorem \cite{smith1995theory}
	\begin{align*}
	C^{\infty}=\lim\limits_{t\rightarrow \infty}sup C(t)\leq \frac{\tau \epsilon}{k_4}
	\end{align*}
	So as $\epsilon \rightarrow 0$,  $C^{\infty}=\lim\limits_{t\rightarrow \infty}sup C(t)\leq0$\\
	Similarly by using $\lim\limits_{t\rightarrow \infty}inf I(t)=0$, it can be shown that
	\begin{align*}
	C_{\infty}=\lim\limits_{t\rightarrow \infty}inf C(t)\geq 0
	\end{align*}\\
	Thus, it follows from above two relations
	\begin{align*}
	C_{\infty} \geq 0 \geq C^{\infty}
	\end{align*}
	Hence $\lim\limits_{t\rightarrow \infty} C(t)= 0$\\
	Similarly, it can be shown that 
	\begin{align*}
	\lim\limits_{t\rightarrow \infty} R(t)= 0, \lim\limits_{t\rightarrow \infty} S(t)= \frac{\Pi_H \left(\mu + \omega\right)}{\mu \left(\mu + \omega + l\right)}, \text{and} \lim\limits_{t\rightarrow \infty} L(t)= \frac{\Pi_H l}{\mu \left(\mu + \omega + l\right)}.
	\end{align*}
	Therefore by combining all above equations, it follows that each solution of the model equations \eqref{EQ:eqn 3.1}, with initial
	conditions $\in \Omega$ , approaches $\varepsilon_0$ as $t\rightarrow \infty $ for $R_0\leq 1$.
\end{proof}

\subsection*{Existence and stability of endemic equilibria}

In this section, the existence of the endemic equilibrium of the model \eqref{EQ:eqn 3.1} is established. Let us denote
\begin{align*}
M_1 & =\frac{\mu+\omega}{\mu + \sigma}, M_2=\frac{\left(1 - \kappa\right) \sigma \left(\mu+\omega\right) }{\left(\mu + \gamma_1\right) \left(\mu + \sigma\right) }, M_3=\frac{\kappa \sigma \left(\mu+\omega\right) }{\left(\mu + \gamma_2 + \tau\right) \left(\mu + \sigma\right)},\\
M_4 &=\frac{\kappa \tau \sigma \left(\mu+\omega\right) }{\left(\mu + \gamma_2 + \tau\right) \left(\mu + \gamma_3 + \delta\right) \left(\mu + \sigma\right)}.
\end{align*}
Let $\varepsilon^*=(S^*, E^*, A^*, I^*, C^*, R^*)$ represents any arbitrary endemic equilibrium point (EEP) of the model \eqref{EQ:eqn 3.1}. Further, define
\begin{align}\label{EQ:eqn A.2}
\lambda^*=\frac{\beta_1(I^*+\rho A^*)}{N^*-L^*-C^*}
\end{align}
It follows, by solving the equations in \eqref{EQ:eqn 3.1} at steady-state, that
\begin{align}\label{EQ:eqn A.3}
S^*&= \frac{\left(\mu+\omega\right) L^{*}}{l}, L^{*} = \frac{\Pi_{H} l}{\lambda^* \left(\mu+\omega\right) + \mu \left(\mu+\omega +l\right)}, E^*=\frac{M_{1} L^{*} \lambda^*}{l},\\\nonumber  A^*&=\frac{M_{2} L^{*} \lambda^*}{l},
I^* =\frac{M_{3} L^{*} \lambda^*}{l}, C^*=\frac{M_{4} L^{*} \lambda^*}{l}\\
R^* &=\frac{\left(\gamma_1 M_2 + \gamma_2 M_3 + \gamma_3 M_4\right) L^{*} \lambda^*}{l \mu}\nonumber
\end{align}
Substituting the expression in \eqref{EQ:eqn A.3} into \eqref{EQ:eqn A.2} shows that the non-zero equilibrium of the model \eqref{EQ:eqn 2.1} satisfy the following linear equation, in terms of $\lambda^*$:
\begin{align}
A \lambda^* = B
\end{align}
where 
\begin{align*}
A&= \mu M_{1} + M_{2} \left(\mu + \gamma_1\right) + M_{3} \left(\mu + \gamma_2\right) + \gamma_3 M_{4} \\
B&=\mu \left(\mu+\omega\right) (R_{0} - 1)
\end{align*}
Since, $M_{1}>0$, $M_{2}>0$, $M_{3}>0$, and $M_{4}>0$ $\implies$ $A>0$, it is clear that the model \eqref{EQ:eqn 3.1} has an unique endemic equilibrium point (EEP) whenever $R_0>1$ and no positive endemic equilibrium point whenever $R_0<1$. This rules out the possibility of the existence of equilibrium other than DFE whenever $R_0<1$. Therefore, we have the following result:

\begin{theorem}
	The model \eqref{EQ:eqn 3.1} has a unique endemic (positive) equilibrium, given by $\varepsilon^*$, whenever $R_0>1$ and has no endemic equilibrium for $R_0\leq 1$.
\end{theorem}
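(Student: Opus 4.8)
The plan is to reduce the steady-state equations of \eqref{EQ:eqn 3.1} to a single scalar equation in the force of infection $\lambda^*=\beta_1(I^*+\rho A^*)/(N^*-L^*-C^*)$, and then to observe that this equation is \emph{linear} in $\lambda^*$, so that positivity of its unique root is governed entirely by the sign of $R_0-1$.

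First I would set every right-hand side in \eqref{EQ:eqn 3.1} equal to zero. The $L$-equation gives $S^*=(\mu+\omega)L^*/l$; substituting $\lambda^*$ into the $S$-equation and eliminating $S^*$ yields the closed form $L^*=\Pi_H l/\big(\lambda^*(\mu+\omega)+\mu(\mu+\omega+l)\big)$. The remaining equations are then solved in cascade — each compartment is linear in the preceding one once $\lambda^*$ is fixed — producing $E^*=M_1L^*\lambda^*/l$, $A^*=M_2L^*\lambda^*/l$, $I^*=M_3L^*\lambda^*/l$, $C^*=M_4L^*\lambda^*/l$ and the displayed expression for $R^*$, with $M_1,\dots,M_4$ as defined just above the theorem. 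This is precisely the content of \eqref{EQ:eqn A.3}; the only care needed here is correct bookkeeping of the products of incubation and removal rates appearing in the denominators of the $M_i$.

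Next I would substitute \eqref{EQ:eqn A.3} back into the definition \eqref{EQ:eqn A.2} of $\lambda^*$. Writing $N^*-L^*-C^*=S^*+E^*+A^*+I^*+R^*$ and inserting the formulas above, both numerator and denominator acquire a common factor $L^*/l$, and the numerator additionally carries one power of $\lambda^*$; after cancelling $L^*$ and discarding the root $\lambda^*=0$ (which reproduces the DFE) one is left with a linear equation $A\lambda^*=B$, where $A=\mu M_1+M_2(\mu+\gamma_1)+M_3(\mu+\gamma_2)+\gamma_3M_4$ and $B=\mu(\mu+\omega)(R_0-1)$. The one genuinely non-mechanical step is recognizing that $\beta_1(M_3+\rho M_2)=(\mu+\omega)R_0$, since $M_3+\rho M_2=(\mu+\omega)\sigma\big[\kappa/((\mu+\gamma_2+\tau)(\mu+\sigma))+\rho(1-\kappa)/((\mu+\gamma_1)(\mu+\sigma))\big]$; this identity is what collapses the constant term to the clean form $B=\mu(\mu+\omega)(R_0-1)$.

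Finally, since each $M_i$ is a quotient of strictly positive parameters we have $M_1,M_2,M_3,M_4>0$, hence $A>0$. The equation $A\lambda^*=B$ therefore has the unique solution $\lambda^*=B/A$, which is strictly positive iff $B>0$, i.e. iff $R_0>1$; it is zero iff $R_0=1$ and negative iff $R_0<1$. A positive $\lambda^*$ determines, through \eqref{EQ:eqn A.3}, a unique equilibrium $\varepsilon^*$ all of whose components are positive, while $\lambda^*\le 0$ yields only the DFE (when $R_0=1$) or no biologically admissible equilibrium (when $R_0<1$). This establishes existence and uniqueness of the EEP for $R_0>1$ and its non-existence for $R_0\le 1$. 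I expect the main obstacle to be the cancellation producing $A\lambda^*=B$ — in particular verifying that the $S^*$ contribution to $N^*-L^*-C^*$ does not spoil linearity — but because every compartment is affine in its predecessor once $\lambda^*$ is frozen, no true nonlinearity in $\lambda^*$ survives and the reduction closes.
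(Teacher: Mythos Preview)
Your proposal is correct and follows essentially the same route as the paper: you introduce the force of infection $\lambda^*$ as in \eqref{EQ:eqn A.2}, derive the cascade expressions \eqref{EQ:eqn A.3} via the $M_i$, substitute back to obtain the linear equation $A\lambda^*=B$ with the identical $A$ and $B$, and conclude from $A>0$ and $B=\mu(\mu+\omega)(R_0-1)$. If anything, you supply more detail than the paper does---in particular the identity $\beta_1(M_3+\rho M_2)=(\mu+\omega)R_0$ and the cancellation of the common factor $L^*/l$ are made explicit---so there is no gap.
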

Now we will prove the local stability of endemic equilibrium.
\begin{theorem}
	The endemic equilibrium $\varepsilon^*$ of the COVID-19 system~\eqref{EQ:eqn 3.1} with lockdown is locally asymptotically stable if $R_0>1$.
\end{theorem}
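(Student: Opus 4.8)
The plan is to use center manifold theory in the Castillo-Chavez--Song formulation, since a direct Routh--Hurwitz analysis of the $7\times7$ Jacobian at $\varepsilon^{*}$ is unwieldy. First I would rewrite \eqref{EQ:eqn 3.1} as $\dot x=f(x,\phi)$ with $x=(S,L,E,A,I,C,R)^{T}$ and choose $\beta_{1}$ as the bifurcation parameter $\phi$. Solving $R_{0}=1$ for $\beta_{1}$ gives the critical value
\[
\beta_{1}^{*}=\frac{(\mu+\sigma)(\gamma_{1}+\mu)(\gamma_{2}+\tau+\mu)}{\kappa\sigma(\gamma_{1}+\mu)+\rho(1-\kappa)\sigma(\gamma_{2}+\tau+\mu)},
\]
so that $R_{0}>1\iff\beta_{1}>\beta_{1}^{*}$. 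By Lemma~\ref{EQ:eqn L.1}, at $\beta_{1}=\beta_{1}^{*}$ the Jacobian $J_{0}:=Df(\varepsilon_{0},\beta_{1}^{*})$ has a simple zero eigenvalue and all other eigenvalues with negative real part, so the hypotheses of the theorem are satisfied and the stability of the bifurcating endemic branch is governed by two scalar coefficients $a$ and $b$.

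Next I would compute the right null vector $w=(w_{1},\dots,w_{7})^{T}$ and left null vector $v=(v_{1},\dots,v_{7})$ of $J_{0}$ normalized by $v\cdot w=1$. Reading off $J_{0}$ (and using that $S^{0}=N^{0}-L^{0}$, so the incidence linearization coefficient equals $\beta_{1}^{*}$), the $L$, $C$, $R$ rows force $v_{1}=v_{2}=v_{6}=v_{7}=0$, while the $E$, $A$, $I$ equations give $v_{3},v_{4},v_{5}>0$ with $v_{4}=\tfrac{\rho\beta_{1}^{*}}{\gamma_{1}+\mu}v_{3}$, $v_{5}=\tfrac{\beta_{1}^{*}}{\gamma_{2}+\tau+\mu}v_{3}$. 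For $w$, the $A$, $I$, $C$, $R$ equations give $w_{4},w_{5},w_{6},w_{7}>0$ in terms of a free $w_{3}>0$, and the $S$, $L$ equations then force $w_{1}<0$, $w_{2}=\tfrac{l}{\mu+\omega}w_{1}<0$.

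Then I would evaluate the bifurcation coefficients
\[
a=\sum_{k,i,j}v_{k}w_{i}w_{j}\,\frac{\partial^{2}f_{k}}{\partial x_{i}\partial x_{j}}(\varepsilon_{0},\beta_{1}^{*}),\qquad
b=\sum_{k,i}v_{k}w_{i}\,\frac{\partial^{2}f_{k}}{\partial x_{i}\partial\beta_{1}}(\varepsilon_{0},\beta_{1}^{*}).
\]
The key simplification is that the only nonlinearity in the whole system is the incidence term $g=\beta_{1}S(I+\rho A)/(S+E+A+I+R)$ appearing in $\dot E=f_{3}$; the $A$, $I$, $C$, $R$ equations are linear, so among the indices $k$ with $v_{k}\neq0$ only $k=3$ contributes. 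Differentiating $g$ twice at $\varepsilon_{0}$ (where $I=A=C=E=R=0$ and the denominator equals $S^{0}$), all surviving terms are cross-partials pairing $\{S,E,A,I,R\}$ with $\{A,I\}$, and a short contraction gives
\[
a=-\frac{2\beta_{1}^{*}}{S^{0}}\,v_{3}\,(w_{5}+\rho w_{4})\,(w_{3}+w_{4}+w_{5}+w_{7})<0,\qquad
b=v_{3}\,(w_{5}+\rho w_{4})>0,
\]
using $v_{3},w_{3},w_{4},w_{5},w_{7}>0$. With $a<0<b$ the Castillo-Chavez--Song theorem yields a supercritical (forward) transcritical bifurcation at $R_{0}=1$, so the unique endemic equilibrium $\varepsilon^{*}$ guaranteed by the previous theorem is locally asymptotically stable for $R_{0}$ in a right-neighborhood of $1$.

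The main obstacle is purely computational: correctly differentiating the rational incidence term $\beta_{1}S(I+\rho A)/(N-L-C)$ twice, keeping track that $N$ in the denominator also depends on $S,E,A,I,C,R$, and contracting against $v$ and $w$ without a sign slip — the payoff being that the answer is manifestly negative once one notices that only $f_{3}$ contributes and that the $w$-dependence collapses into the two positive bracketed factors above. I would close by noting (as is standard for this SEIR-type structure) that no further change of stability of $\varepsilon^{*}$ occurs as $R_{0}$ increases past threshold, so the local asymptotic stability conclusion holds throughout $R_{0}>1$.
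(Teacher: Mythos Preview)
Your proposal is correct and follows essentially the same route as the paper: both take $\beta_{1}$ as the bifurcation parameter, compute the same critical value $\beta_{1}^{*}$, the same left/right null vectors of $J_{\varepsilon_{0}}$ (with $v_{1}=v_{2}=v_{6}=v_{7}=0$ and $w_{1},w_{2}<0$, $w_{3},\dots,w_{7}>0$), and arrive at the same expressions $a=-\tfrac{2\beta_{1}^{*}}{S^{0}}v_{3}(\rho w_{4}+w_{5})(w_{3}+w_{4}+w_{5}+w_{7})<0$ and $b=v_{3}(\rho w_{4}+w_{5})>0$, invoking Castillo-Chavez--Song to conclude. Your closing caveat that the center-manifold argument strictly gives stability only for $R_{0}$ in a right-neighborhood of $1$ is well taken; the paper simply asserts the conclusion for all $R_{0}>1$ without further comment.
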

\begin{proof}
	The Jacobian matrix of the system \eqref{EQ:eqn 3.1} $J_{\varepsilon_0}$ at DFE is given by
	\begin{align*}
	J_{\varepsilon_0}={\begin{pmatrix}
		-\left(\mu + l\right) & \omega &0 &-\rho \beta_1 &- \beta_1 & 0 &0  \\
		 l & -\left(\mu + \omega\right) &0 &0 &0 & 0 &0  \\
		0& 0 &-(\mu+\sigma) & \rho \beta_1 &\beta_1 &0&0 \\
		0 & 0 & (1-\kappa)\sigma & -(\mu + \gamma_1)&0 & 0 &0 \\
		0& 0 &\kappa \sigma & 0&-(\mu+\gamma_2+\tau)  &0 &0 \\
		0 &0 &0&0  &\tau&-(\mu+\gamma_3+\delta) &0 \\
		0 &0 &0 &\gamma_1 & \gamma_2&\gamma_3&-\mu \\
		\end{pmatrix}},
	\end{align*}
	
	Here, by taking $\beta_1$ as a bifurcation parameter, we use the central manifold theory method to determine the local stability of the endemic equilibrium \cite{castillo2004dynamical}. Taking $\beta_1$ as the bifurcation parameter and gives critical value of $\beta_1$ at $R_0=1$ is given as
	
	\begin{equation*}
	\beta_1^*=\frac{(\mu+\sigma)(\gamma_1+\mu)(\gamma_2+\tau+\mu)}{[\kappa \sigma(\gamma_1+\mu)+(1-\kappa)\rho \sigma(\gamma_2+\tau+\mu)]}    
	\end{equation*}
	
	The Jacobian of \eqref{EQ:eqn 3.1} at $\beta=\beta_1^*$, denoted by $J_{\varepsilon_0}|_{\beta=\beta_1^*}$ has a right eigenvector (corresponding to the zero eigenvalue) given by
	$w=(w_1, w_2, w_3, w_4, w_5, w_6, w_7)^T$ , where
	\begin{align*}
	w_1 &=-\frac{ \left(\mu+\sigma\right) \left(\mu+\omega\right)  }{\mu \left(\mu+\omega + l\right)}, w_2=-\frac{ \left(\mu+\sigma\right) l }{\mu \left(\mu+\omega + l\right)}, w_3=1, w_4=\frac{\left(1-\kappa\right) \sigma}{\mu + \gamma_1},\\
	w_5& =\frac{\kappa \sigma}{\mu + \gamma_2 + \tau}, w_6 = \frac{\kappa \sigma \tau}{\left( \mu + \gamma_2 + \tau\right) \left(\mu + \gamma_{3} + \delta\right) }\\ 
	w_7&=\frac{\gamma_1(1-\kappa)\sigma}{\mu(\gamma_1+\mu)}+\frac{\gamma_2 \kappa \sigma }{\mu (\gamma_2+\tau+\mu)}
	+\frac{\gamma_3\tau \kappa \sigma}{\mu(\gamma_2+\tau+\mu)(\delta+\gamma_3+\mu)}
	\end{align*}
	Similarly, from  $J_{\varepsilon_0}|_{\beta=\beta_1^*}$, we obtain a left eigenvector $v=(v_1, v_2, v_3, v_4, v_5, v_6, v_7)$ (corresponding to the zero eigenvalue), where
	\begin{align*}
	& v_1=0, v_2=0, v_3 = 1, v_4=\frac{\rho \beta_1^*}{\gamma_1+\mu}, v_5=\frac{\beta_1^*}{\gamma_2+\tau+\mu}, v_6=0, v_7=0.
	\end{align*}
	Selecting the notations $S=x_1$, $L=x_2$, $E=x_3$, $A=x_4$, $I=x_5$, $C=x_6$, $R=x_7$ and $\frac{dx_i}{dt}=f_i$. Now we calculate the following second-order partial derivatives of $f_i$ at the disease-free equilibrium $\varepsilon_0$ and obtain
	\begin{align*}
	&\frac{\partial^2 f_3}{\partial x_4 \partial x_3}=-\frac{\rho \beta_1^{*} \mu \left(\mu + \omega + l \right) }{\Pi_H \left(\mu + \omega \right) } =\frac{\partial^2 f_2}{\partial x_3 \partial x_4},\\ &\frac{\partial^2 f_3}{\partial x_5 \partial x_3}=-\frac{\beta_1^{*} \mu \left(\mu + \omega + l \right) }{\Pi_H \left(\mu + \omega \right) } =\frac{\partial^2 f_3}{\partial x_3 \partial x_5},\\ &\frac{\partial^2 f_3}{\partial x_4^2}=-\frac{2 \rho \beta_1^{*} \mu \left(\mu + \omega + l \right) }{\Pi_H \left(\mu + \omega \right)},\\ &\frac{\partial^2 f_3}{\partial x_5^2}=-\frac{2 \beta_1^{*} \mu \left(\mu + \omega + l \right) }{\Pi_H \left(\mu + \omega \right)},\\ &\frac{\partial^2 f_3}{\partial x_4 \partial x_5}=-\frac{\left(1+\rho\right) \beta_1^{*} \mu \left(\mu + \omega + l \right) }{\Pi_H \left(\mu + \omega \right) } =\frac{\partial^2 f_3}{\partial x_5 \partial x_4},\\ &\frac{\partial^2 f_3}{\partial x_7 \partial x_4}=-\frac{\rho \beta_1^{*} \mu \left(\mu + \omega + l \right) }{\Pi_H \left(\mu + \omega \right) } =\frac{\partial^2 f_3}{\partial x_4 \partial x_7},\\ &\frac{\partial^2 f_3}{\partial x_7 \partial x_5}=-\frac{\beta_1^{*} \mu \left(\mu + \omega + l \right) }{\Pi_H \left(\mu + \omega \right) } =\frac{\partial^2 f_3}{\partial x_5 \partial x_7}.
	\end{align*}
	Now we calculate the coefficients $a$ and $b$ defined in Theorem 4.1 \cite{castillo2004dynamical} of Castillo-Chavez and Song as follow
	\begin{align*}
	a= \sum_{k,i,j=1}^{6} v_k w_i w_j\frac{\partial^2 f_k(0, 0)}{\partial x_i \partial x_j}
	\end{align*}
	and
	\begin{align*}
	b= \sum_{k,i=1}^{6} v_k w_i\frac{\partial^2 f_k(0, 0)}{\partial x_i \partial \beta}
	\end{align*}
	Replacing the values of all the second-order derivatives measured at DFE and $\beta_1=\beta_1^*$, we get
	\begin{align*}
	a&=-2\left(w_{3}+w_{4}+w_{5}+w_{7}\right) \left[\frac{\rho \beta_1^{*} \mu \left(\mu + \omega + l \right) }{\Pi_H \left(\mu + \omega \right) } w_{4} + \frac{\beta_1^{*} \mu \left(\mu + \omega + l \right) }{\Pi_H \left(\mu + \omega \right) } w_{5}\right] <0
	\end{align*}
	and 
	\begin{align*}
	b&= \rho w_4+w_5 \\
	&= \frac{\left(1-\kappa\right) \sigma \rho }{\left(\mu + \gamma_{1}\right)} + \frac{\kappa \sigma }{\left(\mu + \gamma_{2} +\tau\right)}>0.
	\end{align*}
	Since $a<0$ and $b>0$ at $\beta=\beta_1^*$, therefore using the Remark 1 of the Theorem 4.1 stated in \cite{castillo2004dynamical}, a transcritical bifurcation occurs at $R_0=1$ and the unique endemic equilibrium of the COVID-19 system~\eqref{EQ:eqn 3.1} with lockdown is locally asymptotically stable for $R_0>1$.
\end{proof}
\clearpage

\begin{center}
	\section*{\Large{\underline{Figures}}}
\end{center}

\begin{figure}[ht]
	\captionsetup{width=1.1\textwidth}
	\includegraphics[width=1\textwidth]{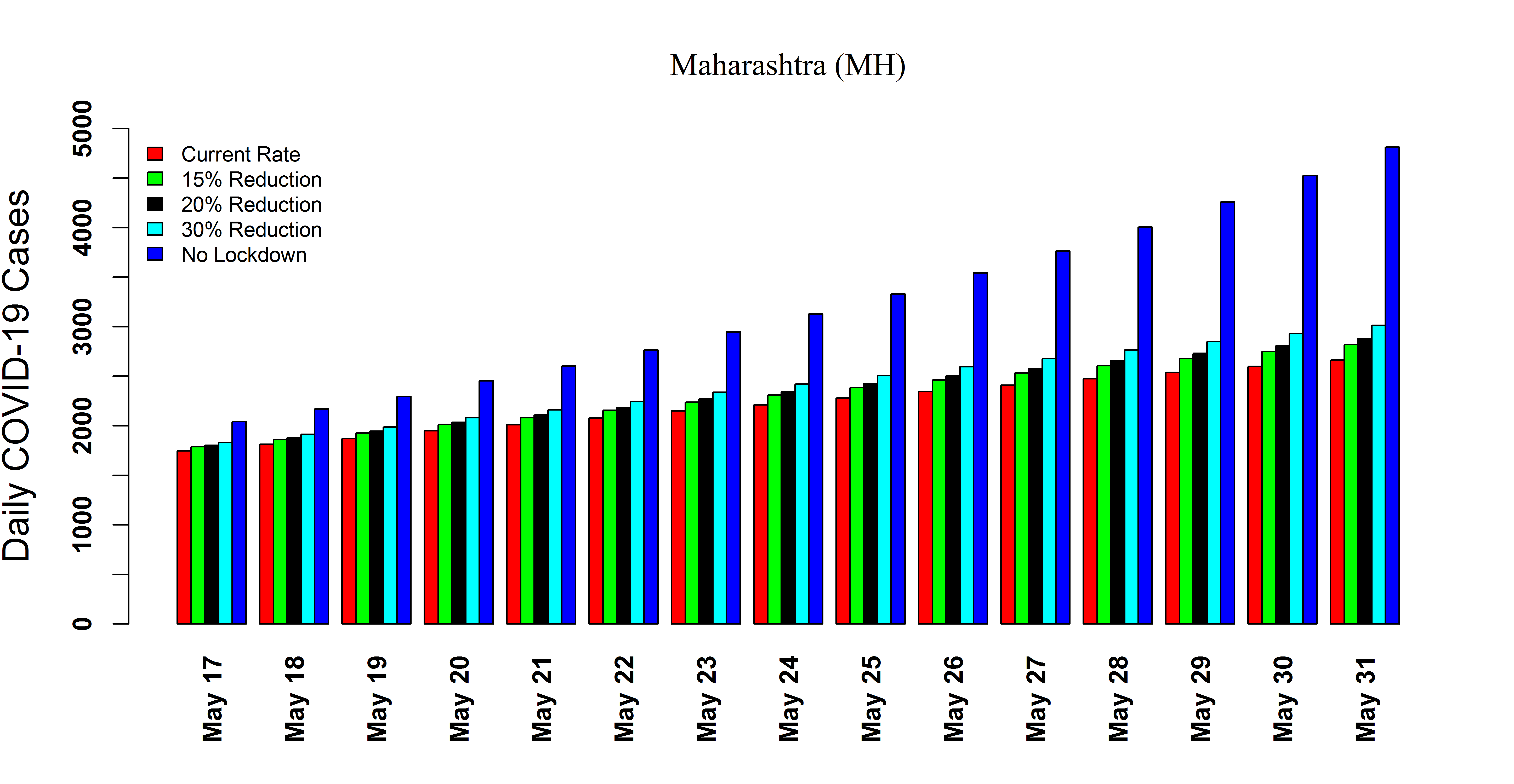}
	\caption{Ensemble model forecast for the daily notified COVID-19 cases in Maharashtra during May 17, 2020 till May 31, 2020, under five different social distancing measure. Various legends are \textbf{Current Rate}: daily notified case projection using the estimated value of the lockdown rate~(see Table~4 main text), \textbf{15\% Reduction}: daily notified case projection using 15\% reduction in the estimated value of the lockdown rate~(see Table~4 main text), \textbf{20\% Reduction}: daily notified case projection using 20\% reduction in the estimated value of the lockdown rate~(see Table~4 main text), \textbf{30\% Reduction}: daily notified case projection using 30\% reduction in the estimated value of the lockdown rate~(see Table~4 main text), and \textbf{No lockdown}: daily notified case projection based on no lockdown scenario, respectively. }
	\label{Fig:Prediction-cases-MH}
\end{figure}

\begin{figure}[ht]
	\captionsetup{width=1.1\textwidth}
	\includegraphics[width=1\textwidth]{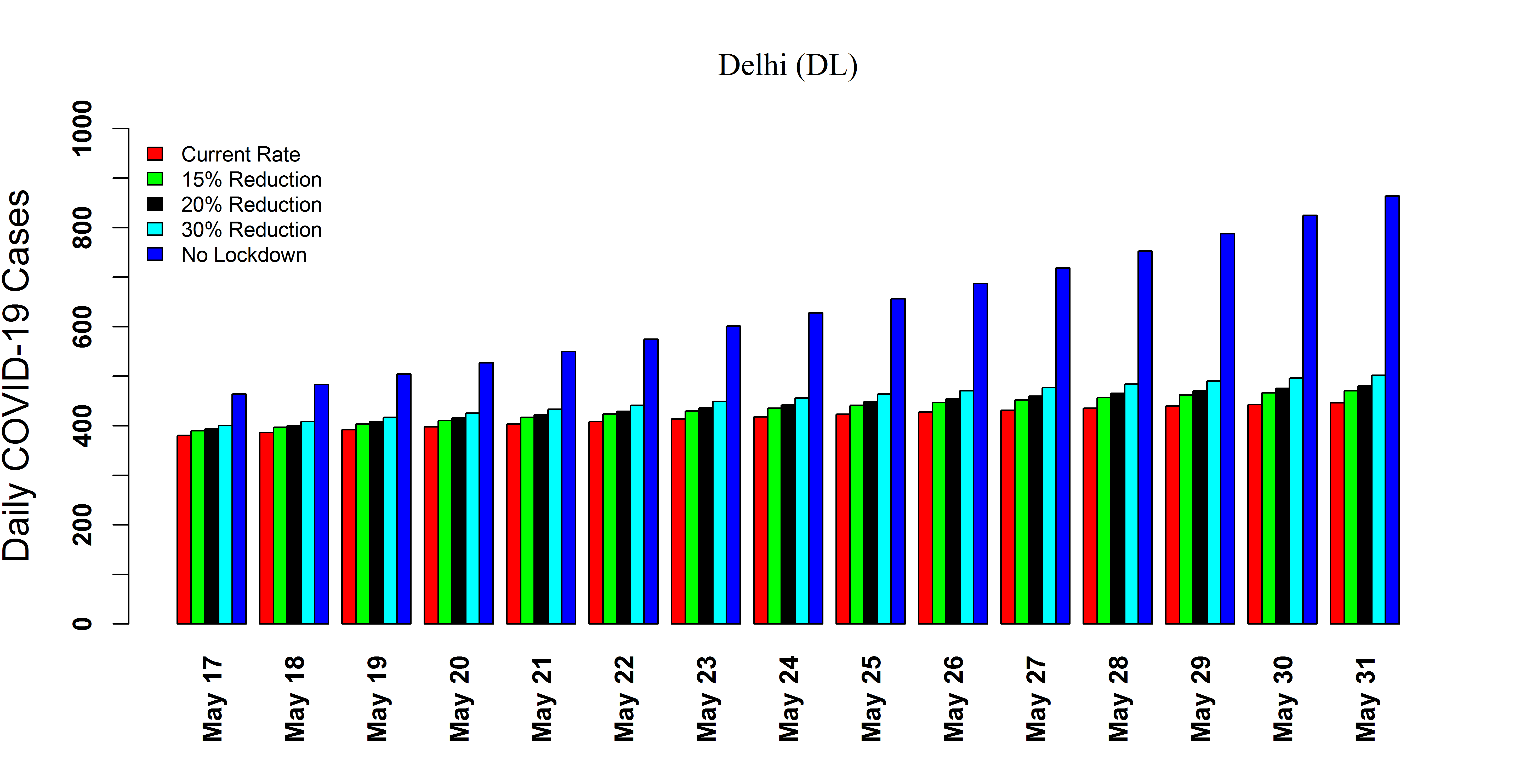}
	\caption{Ensemble model forecast for the daily notified COVID-19 cases in Delhi during May 17, 2020 till May 31, 2020, under five different social distancing measure. Various legends are same as Fig~\ref{Fig:Prediction-cases-MH}.}
	\label{Fig:Prediction-cases-DL}
\end{figure}
\begin{figure}[ht]
	\captionsetup{width=1.1\textwidth}
	\includegraphics[width=1\textwidth]{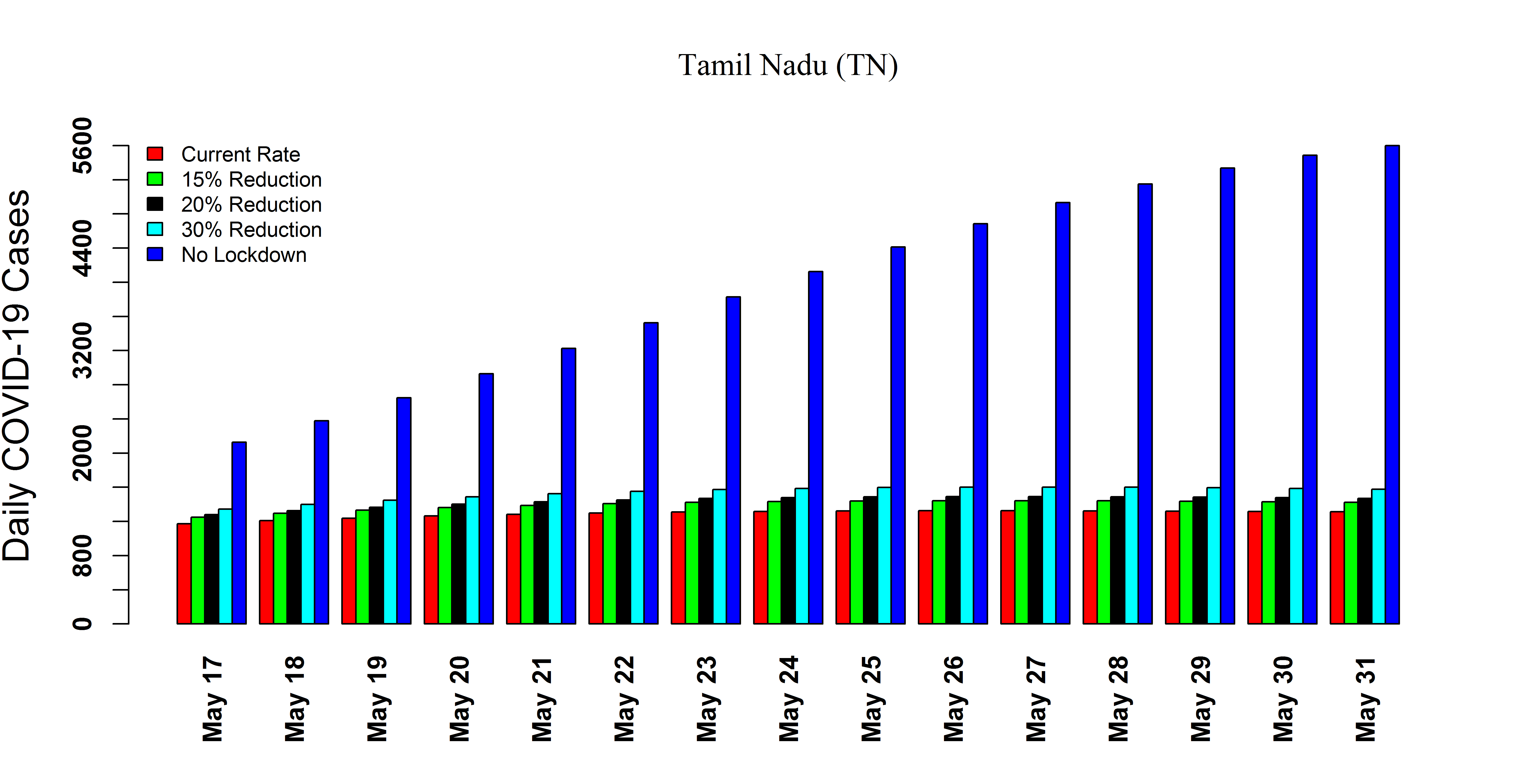}
	\caption{Ensemble model forecast for the daily notified COVID-19 cases in Tamil Nadu during May 17, 2020 till May 31, 2020, under five different social distancing measure. Various legends are same as Fig~\ref{Fig:Prediction-cases-MH}.}
	\label{Fig:Prediction-cases-TN}
\end{figure}

\begin{figure}[ht]
	\captionsetup{width=1.1\textwidth}
	\includegraphics[width=1\textwidth]{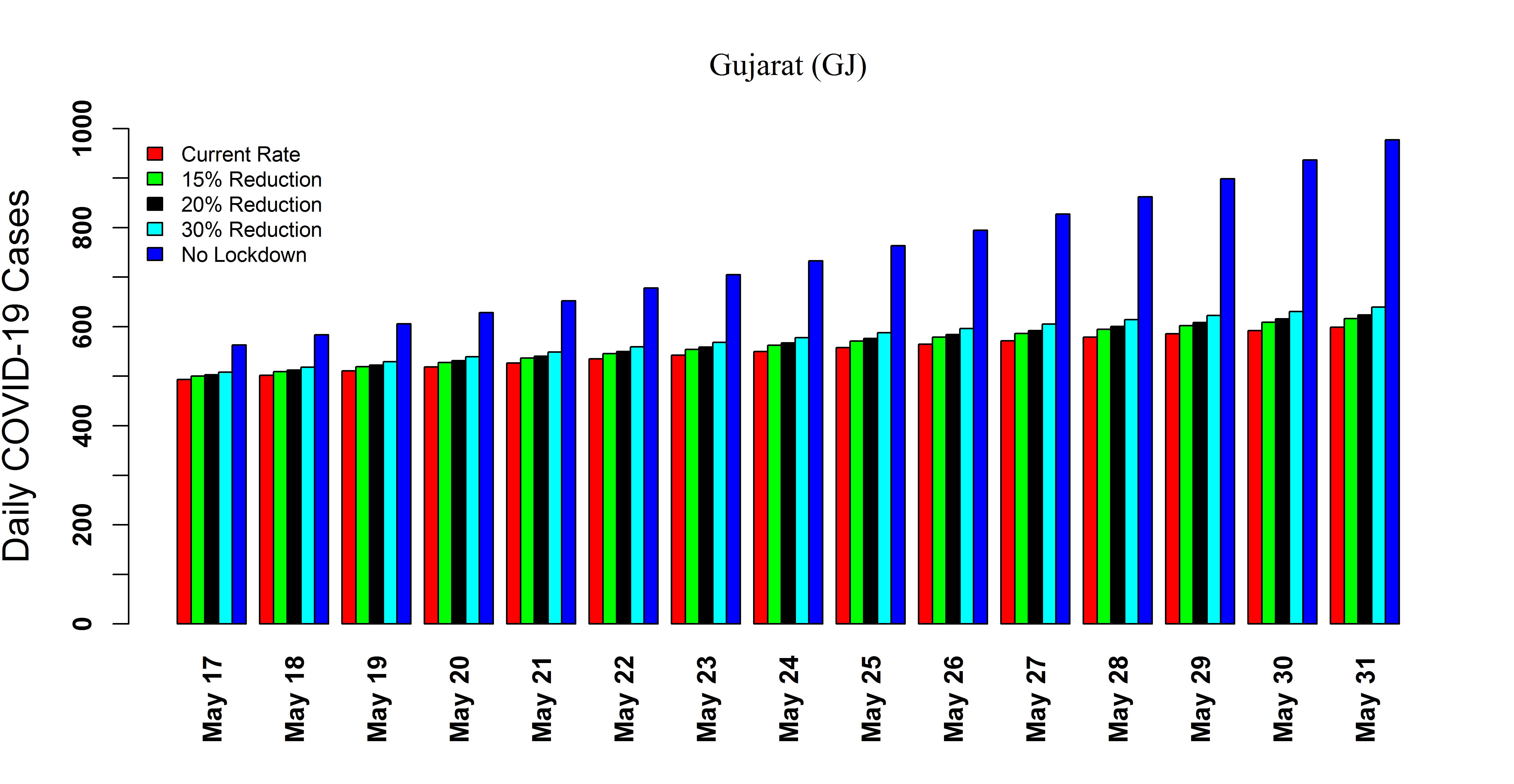}
	\caption{Ensemble model forecast for the daily notified COVID-19 cases in Gujarat during May 17, 2020 till May 31, 2020, under five different social distancing measure. Various legends are same as Fig~\ref{Fig:Prediction-cases-MH}.}
	\label{Fig:Prediction-cases-GJ}
\end{figure}

\begin{figure}[ht]
	\captionsetup{width=1.1\textwidth}
	\includegraphics[width=1\textwidth]{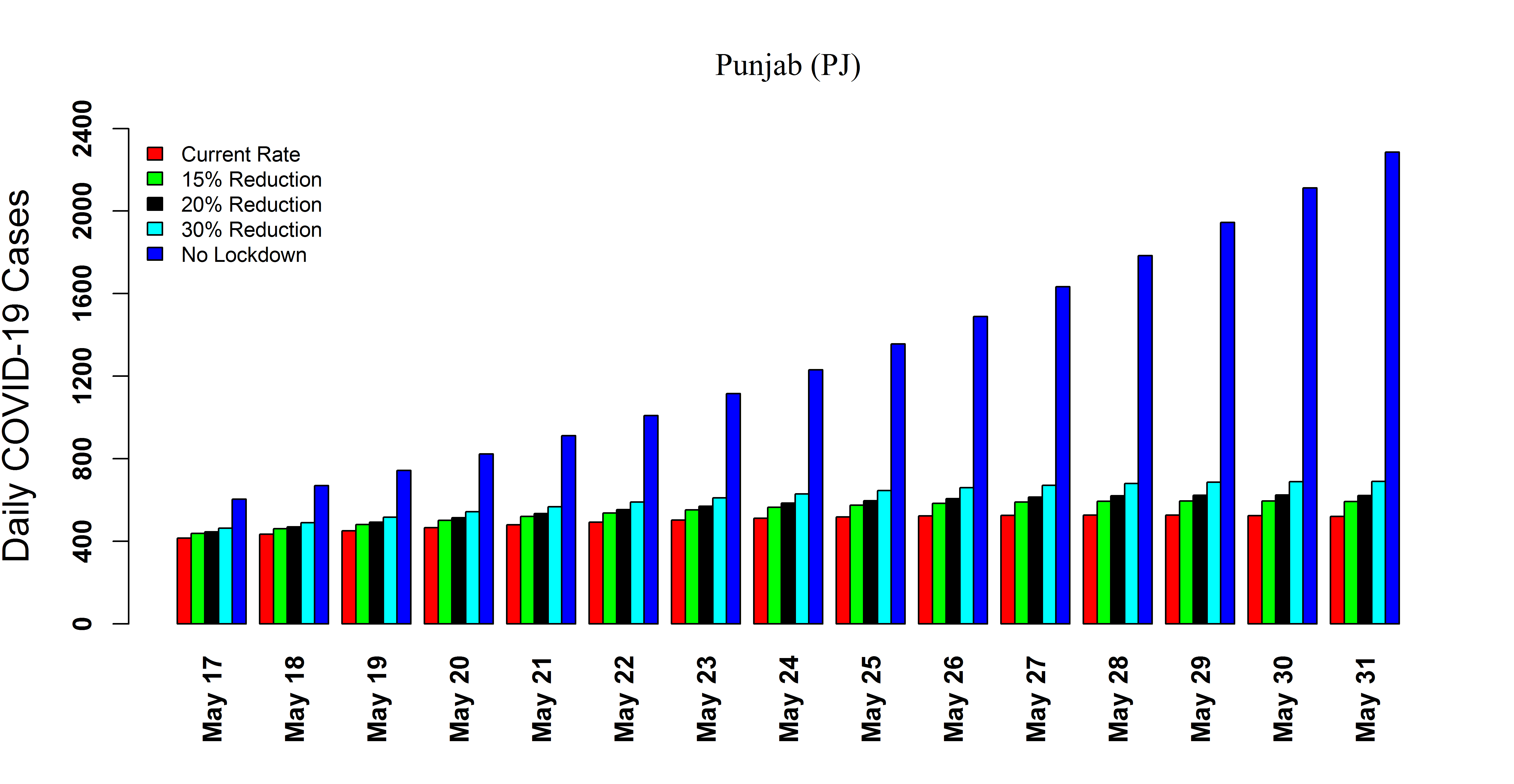}
	\caption{Ensemble model forecast for the daily notified COVID-19 cases in Punjab during May 17, 2020 till May 31, 2020, under five different social distancing measure. Various legends are same as Fig~\ref{Fig:Prediction-cases-MH}.}
	\label{Fig:Prediction-cases-PJ}
\end{figure}

\begin{figure}[ht]
	\captionsetup{width=1.1\textwidth}
	\includegraphics[width=1\textwidth]{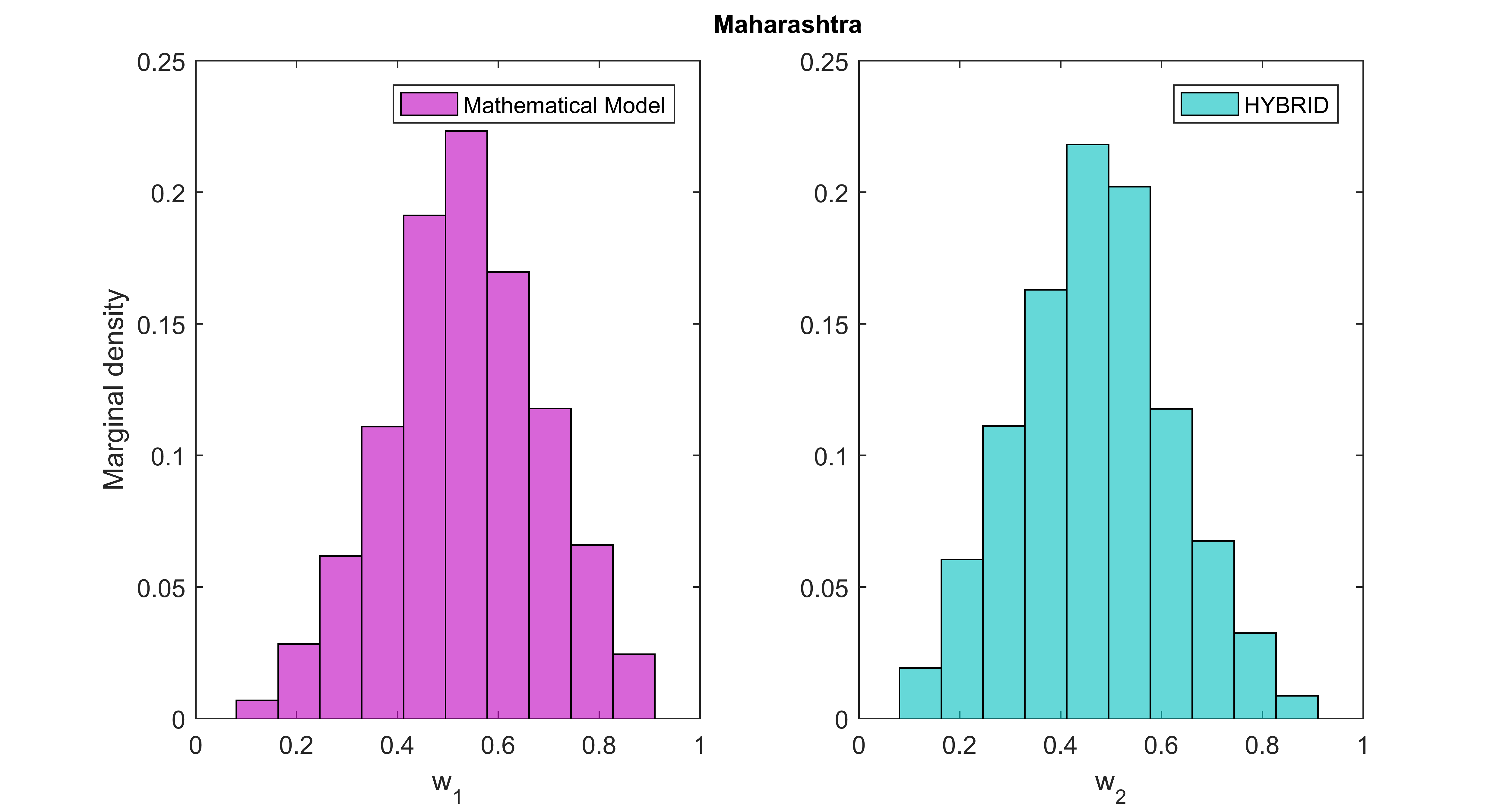}
	\caption{Posterior density of the weights for the mechanistic mathematical model~(\ref{EQ:eqn 2.1} \& \ref{EQ:eqn 3.1}) and the best statistical forecast model (HYBRID), respectively for Maharashtra.}
	\label{Fig:Marginal-distribution-MH}
\end{figure}

\begin{figure}[ht]
	\captionsetup{width=1.1\textwidth}
	\includegraphics[width=1\textwidth]{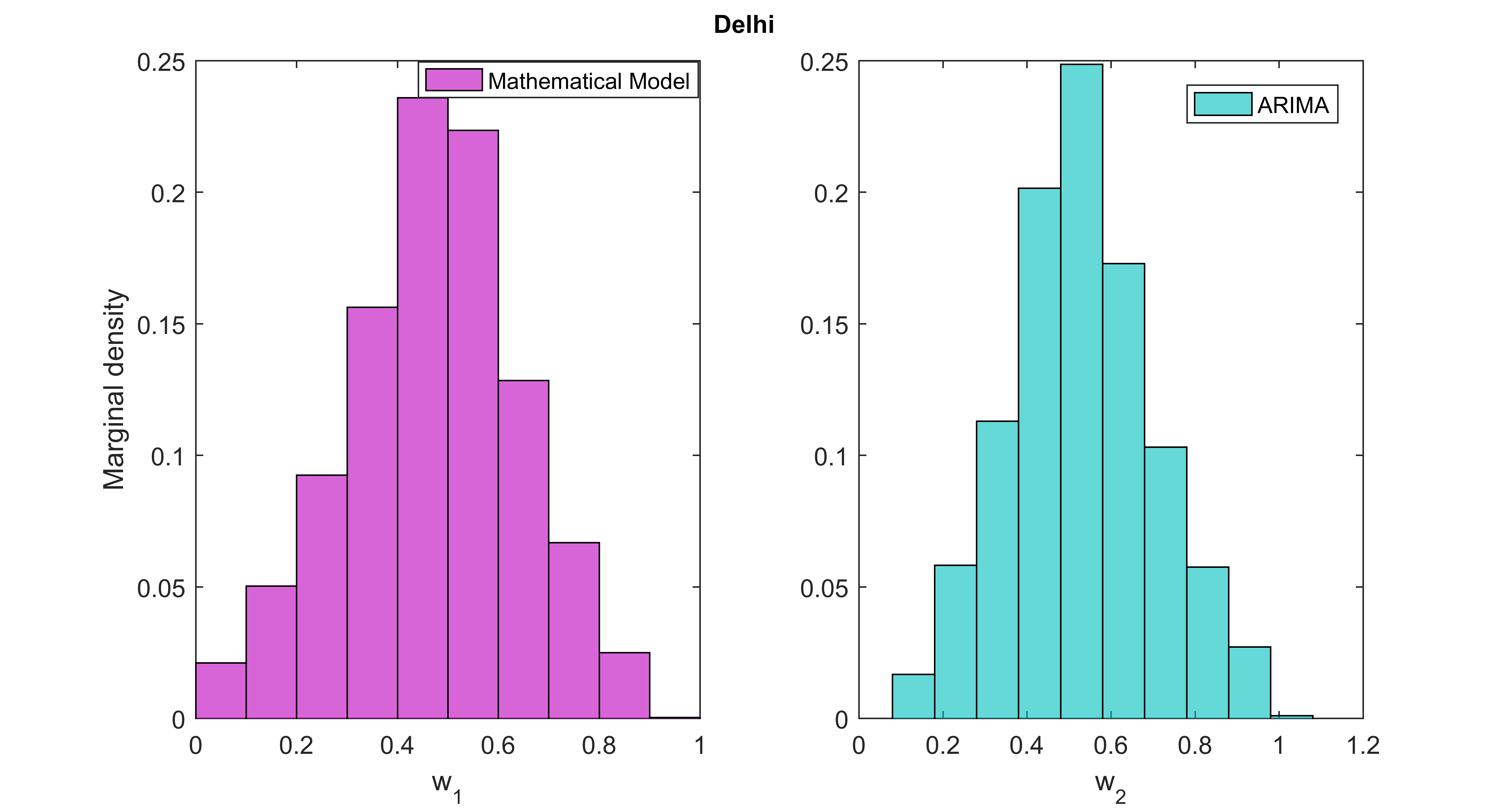}
	\caption{Posterior density of the weights for the mechanistic mathematical model~(\ref{EQ:eqn 2.1} \& \ref{EQ:eqn 3.1}) and the best statistical forecast model (ARIMA), respectively for Delhi}
	\label{Fig:Marginal-distribution-DL}
\end{figure}

\begin{figure}[ht]
	\captionsetup{width=1.1\textwidth}
	\includegraphics[width=1\textwidth]{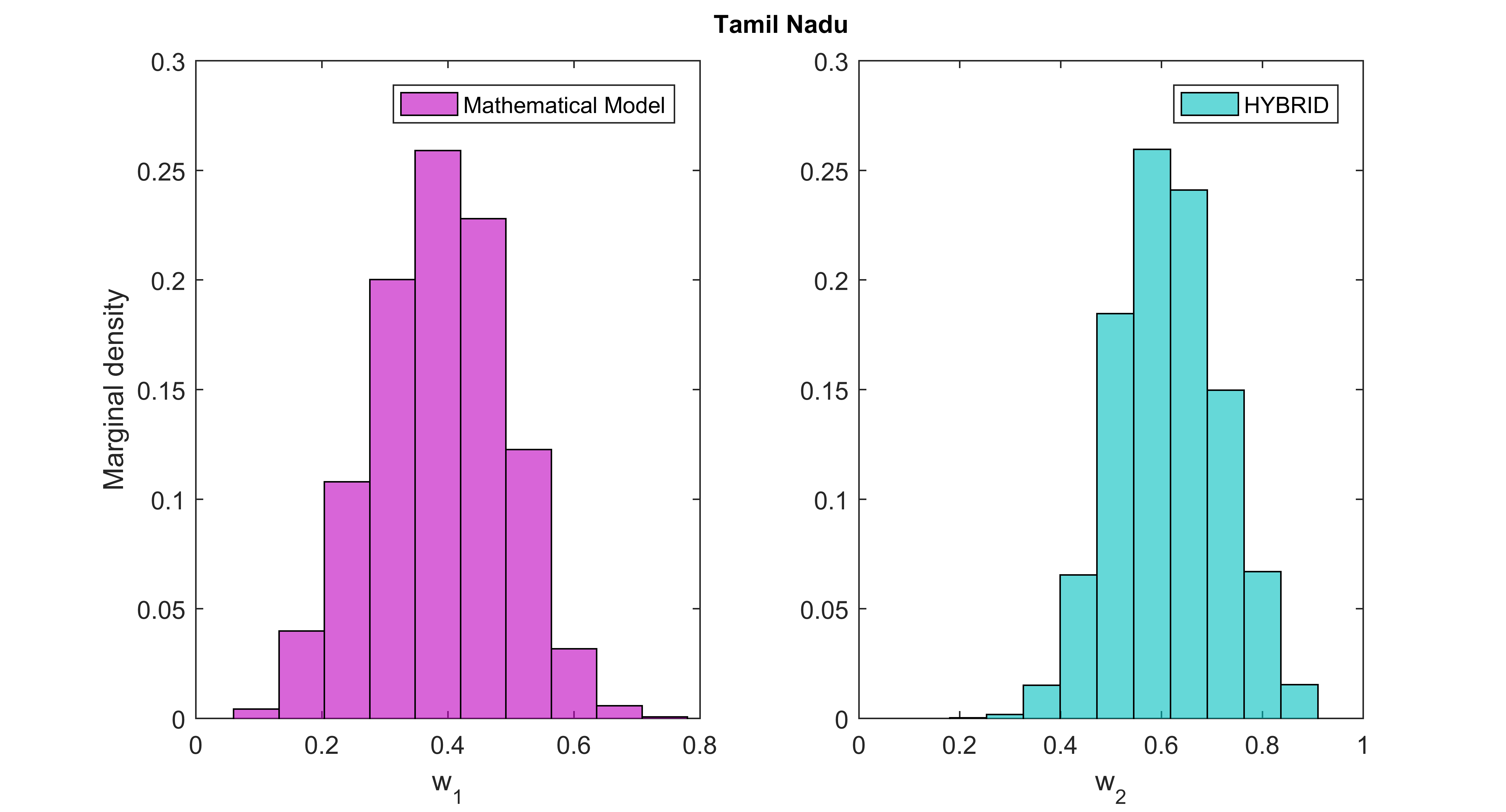}
	\caption{Posterior density of the weights for the mechanistic mathematical model~(\ref{EQ:eqn 2.1} \& \ref{EQ:eqn 3.1}) and the best statistical forecast model (HYBRID), respectively for Tamil Nadu.}
	\label{Fig:Marginal-distribution-TN}
\end{figure}

\begin{figure}[ht]
	\captionsetup{width=1.1\textwidth}
	\includegraphics[width=1\textwidth]{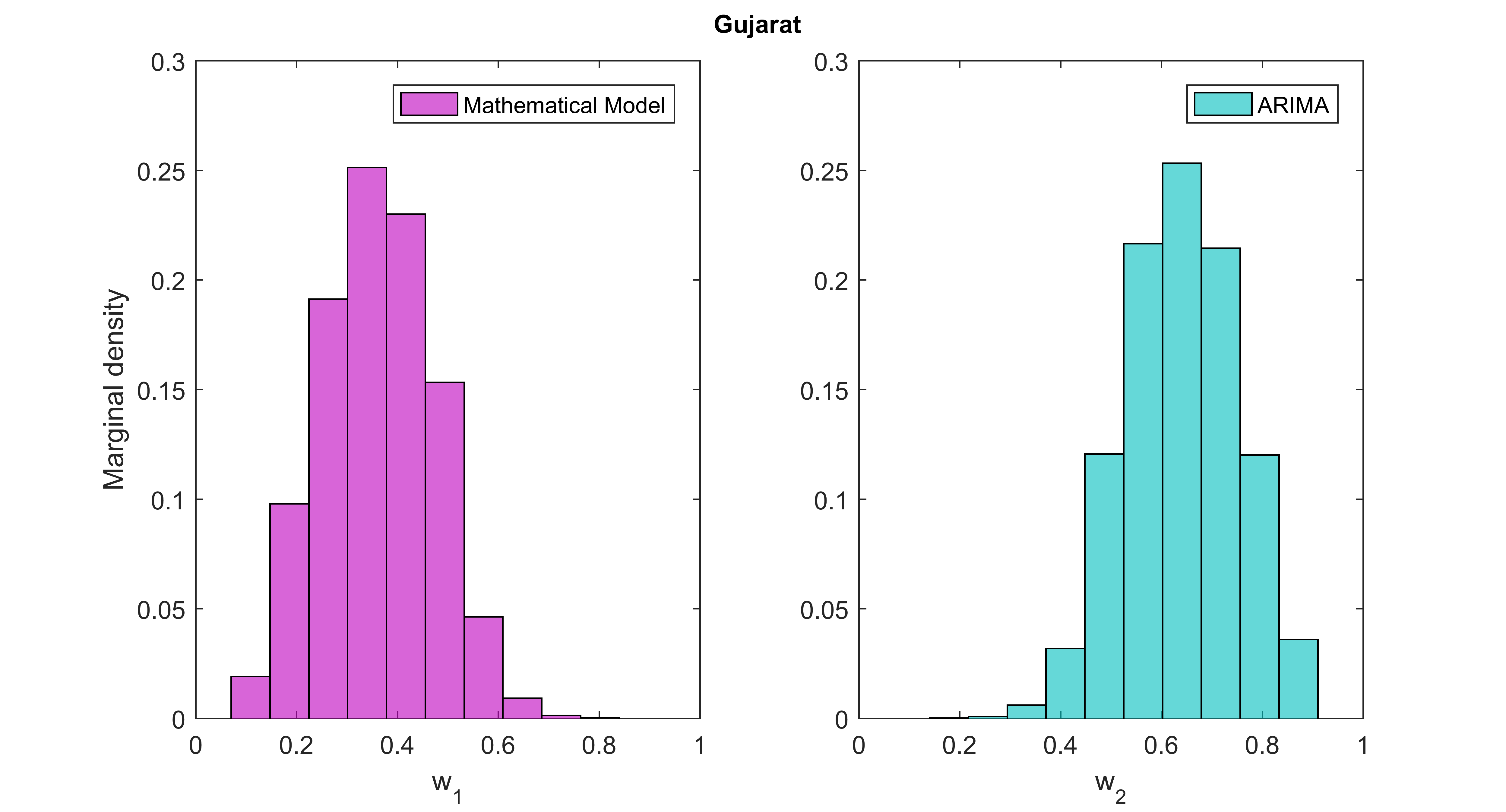}
	\caption{Posterior density of the weights for the mechanistic mathematical model~(\ref{EQ:eqn 2.1} \& \ref{EQ:eqn 3.1}) and the best statistical forecast model (ARIMA), respectively for Gujarat.}
	\label{Fig:Marginal-distribution-GJ}
\end{figure}

\begin{figure}[ht]
	\captionsetup{width=1.1\textwidth}
	\includegraphics[width=1\textwidth]{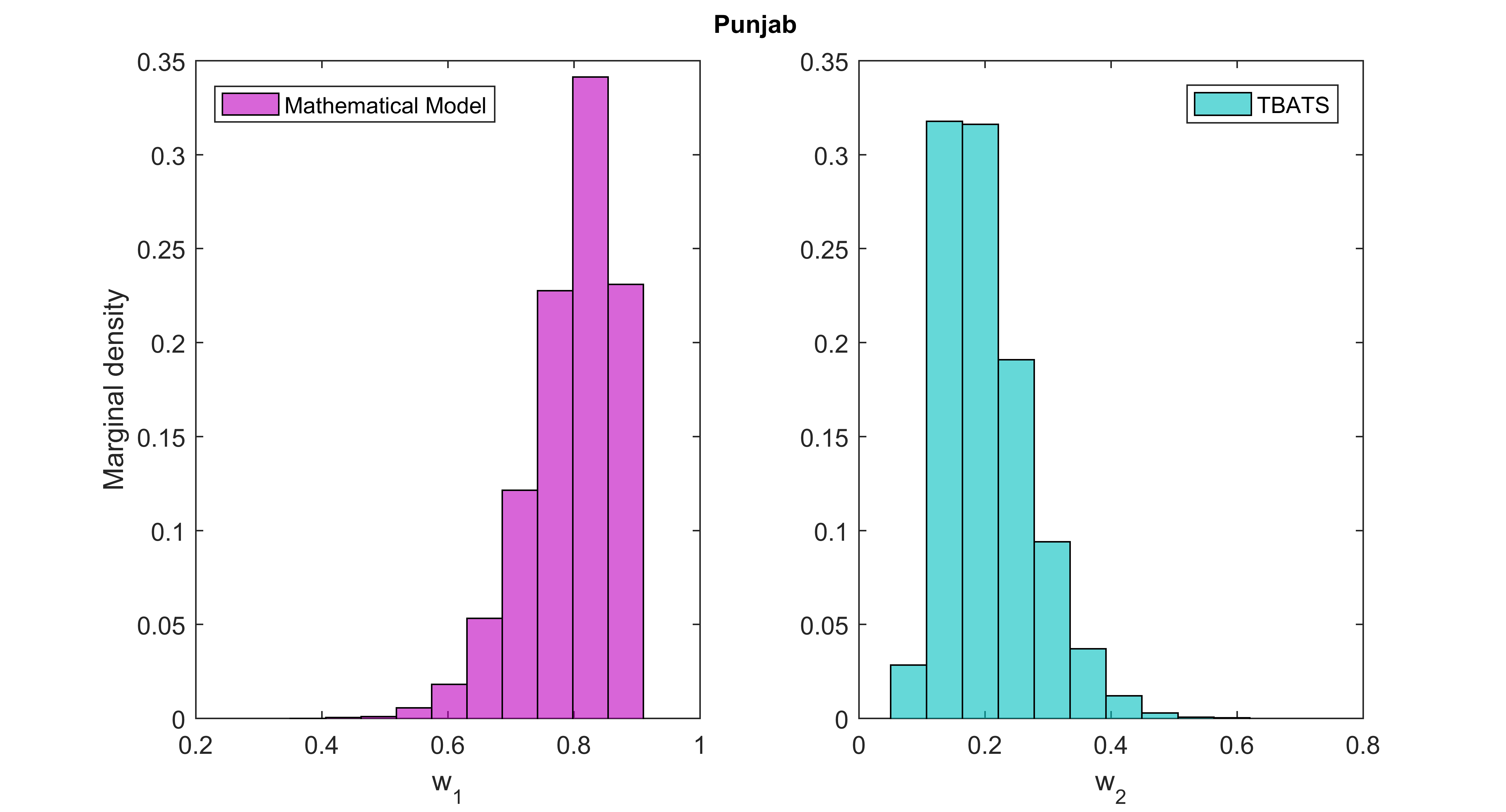}
	\caption{Posterior density of the weights for the mechanistic mathematical model~(\ref{EQ:eqn 2.1} \& \ref{EQ:eqn 3.1}) and the best statistical forecast model (TBATS), respectively for Punjab.}
	\label{Fig:Marginal-distribution-PJ}
\end{figure}

\begin{figure}[ht]
	\captionsetup{width=1.1\textwidth}
	\includegraphics[width=1\textwidth]{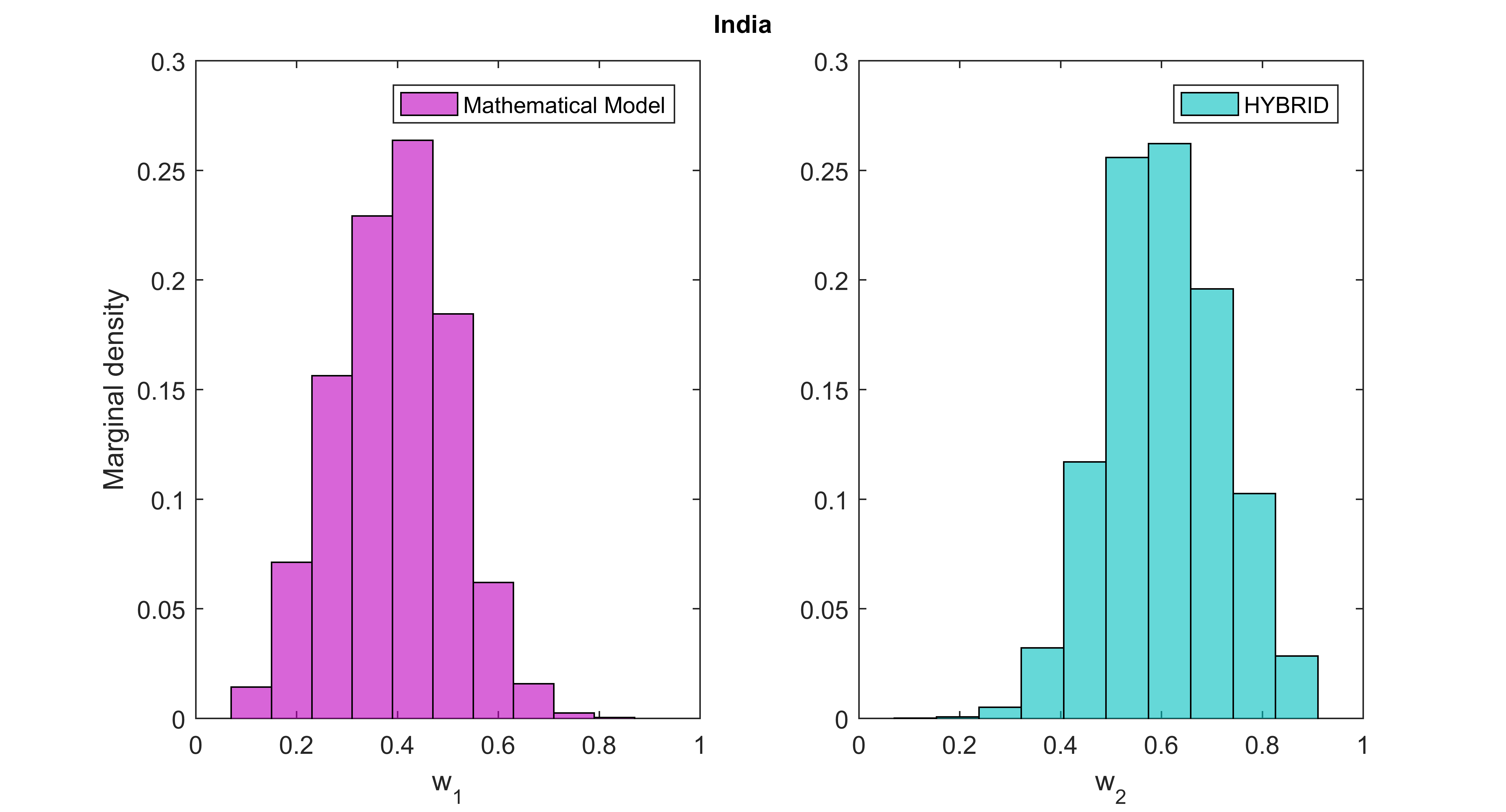}
	\caption{Posterior density of the weights for the mechanistic mathematical model~(\ref{EQ:eqn 2.1} \& \ref{EQ:eqn 3.1}) and the best statistical forecast model (HYBRID), respectively for India.}
	\label{Fig:Marginal-distribution-IND}
\end{figure}

\clearpage
\begin{center}
	\section*{\Large{\underline{Tables}}}
\end{center}

\begin{table}[ht]
	\captionsetup{font=normalsize}
	\captionsetup{width=1.1\textwidth}
	\tabcolsep 16pt		
	\centering
	\caption{Estimated initial state variables of the mathematical model~(\ref{EQ:eqn 2.1}). All data are given in the format \textbf{Estimate (95\% CI)}.}\vspace{0.3cm}
	\begin{tabular}{p{3cm}|ccccccccc} \hline\\
		\textbf{Location} &  $\boldsymbol{S(0)}$ & $\boldsymbol{E(0)}$  & $\boldsymbol{A(0)}$  & $\boldsymbol{I(0)}$ \\ \hline\\
		\textbf{Maharashtra}& $\substack{123243118  \\  \\ (114977907 - 124511349)}$ & $\substack{27.27  \\  \\ (0.51 - 35.06)}$ & $\substack{9531  \\  \\ (9483 - 9989)}$ & $\substack{24.76  \\  \\ (14.17 - 30.26)}$\\\\
		\hline\\
		\textbf{Delhi}& $\substack{16070717  \\  \\ (10346593 - 19737264)}$ & $\substack{14.74  \\  \\ (0.05 - 14.97)}$ & $\substack{437  \\  \\ (92.38 - 4017)}$ & $\substack{282  \\  \\ (7.82 - 325)}$\\\\
		\hline\\
		\textbf{Tamil Nadu}& $\substack{75367306  \\  \\ (70335585 - 79738740)}$ & $\substack{4.80 \\  \\ (0.05 - 4.73)}$ & $\substack{0.40  \\  \\ (0.03 - 2.52)}$ & $\substack{7.61  \\  \\ (1.96 - 10.88)}$\\\\
		\hline\\
		\textbf{Gujarat}& $\substack{66848292  \\  \\ (60218205 - 69744037)}$ & $\substack{7630 \\  \\ (2622 - 9775)}$ & $\substack{26.10  \\  \\ (1.26 - 28.70)}$ & $\substack{14.01  \\  \\ (0.02 - 20.54)}$\\\\
		\hline\\
		\textbf{Punjab}& $\substack{26433551  \\  \\ (26433543 - 26433556)}$ & $\substack{7.83 \\  \\ (0.20- 13.81)}$ & $\substack{19.26  \\  \\ (0.29 - 29.54)}$ & $\substack{8  \\  \\ (1.08 - 15.96)}$\\\\
		\hline\\
		\textbf{India}& $\substack{1226841787  \\  \\ (1219848614 - 1297276832)}$ & $\substack{221104 \\  \\ (21136 - 279521)}$ & $\substack{462997  \\  \\ (27541 - 642484)}$ & $\substack{36043  \\  \\ (3601 - 84059)}$\\\\
		\hline
	\end{tabular}
	\label{Tab:estimated-initial-Table}
\end{table}

\begin{table}[ht]
	\captionsetup{font=normalsize}
	\captionsetup{width=1.1\textwidth}
	\tabcolsep 35pt		
	\centering
	\caption{Goodness of fit (RMSE) for the three statistical forecast model (ARIMA, TBATS and HYBRID), respectively. RMSE for different locations are calculated only for the test period data (May 4, 2020 till May 8, 2020). RMSE values of the best performed statistical forecast model in different locations are shown in red.}\vspace{0.3cm}
	\begin{tabular}{c|ccc} \hline\\
		\textbf{Location} &  \textbf{ARIMA} & \textbf{TBATS}  & \textbf{HYBRID}\\ \hline\\
		\textbf{Maharashtra}& \textbf{179.84368} & \textbf{110.57237} & \textbf{\textcolor{red}{90.04999}}\\\\
		\hline\\
		\textbf{Delhi}& \textbf{\textcolor{red}{92.17782}} & \textbf{96.11452} & \textbf{94.14085}\\\\
		\hline\\
		\textbf{Tamil Nadu}& \textbf{182.5897} & \textbf{244.8254} & \textbf{\textcolor{red}{119.9740}}\\\\
		\hline\\
		\textbf{Gujarat}& \textbf{\textcolor{red}{28.16381}} & \textbf{37.36576} & \textbf{30.92572}\\\\
		\hline\\
		\textbf{Punjab}& \textbf{292.5078} & \textbf{\textcolor{red}{188.9780}} & \textbf{238.9491}\\\\
		\hline\\
		\textbf{India}& \textbf{457.7622} & \textbf{438.1160} & \textbf{\textcolor{red}{368.0725}}\\
		\hline
	\end{tabular}
	\label{Tab:Goodness-of-Fit}
\end{table}

\begin{table}[ht]
	\captionsetup{font=normalsize}
	\captionsetup{width=1.1\textwidth}
	\tabcolsep 10pt		
	\centering
	\caption{Weight estimates for the mechanistic mathematical model~(\ref{EQ:eqn 2.1} \& \ref{EQ:eqn 3.1}) and the best statistical forecast model, respectively. Respective subscript are MH: Maharashtra, DL: Delhi, TN: Tamil Nadu, GJ: Gujarat, PJ: Punjab, and IND: India. $\boldsymbol{w_{1}}$ and $\boldsymbol{w_{2}}$ denote the weights of the COVID-19 mathematical model~(\ref{EQ:eqn 2.1} \& \ref{EQ:eqn 3.1}) and the best statistical forecast model, respectively for a region. All data are provided in the format~\textbf{Estimate (95\% CI)}.}  
	\begin{tabular}{|p{1.8cm}|ccccccccc} \hline\\
		\textbf{Weights} &  \textbf{MH} & \textbf{DL}  & \textbf{TN}  & \textbf{GJ} &  \textbf{PJ} & \textbf{IND} \\ \hline\\
		$\boldsymbol{w_{1}}$& \footnotesize{$\substack{0.48  \\  \\ (0.2243 - 0.8262)}$} & \footnotesize{$\substack{0.5186  \\  \\ (0.1112 - 0.8004)}$} & \footnotesize{$\substack{0.6158  \\  \\ (0.1801 - 0.5848)}$} & \footnotesize{$\substack{0.2009  \\  \\ (0.1554 - 0.5737)}$} & \footnotesize{$\substack{0.8331  \\  \\ (0.6291 - 0.8942)}$} & \footnotesize{$\substack{0.3131  \\  \\ (0.1695 - 0.6149)}$}\\\\
		\hline\\
		$\boldsymbol{w_{2}}$ & \footnotesize{$\substack{0.52  \\  \\ (0.1738 - 0.7757)}$} & \footnotesize{$\substack{0.4814  \\  \\ (0.1996 - 0.8888)}$} & \footnotesize{$\substack{0.3842 \\  \\ (0.4152 - 0.8199)}$} & \footnotesize{$\substack{0.7991  \\  \\ (0.4263 - 0.8446)}$} & \footnotesize{$\substack{0.1669  \\  \\ (0.1058 - 0.3709)}$} & \footnotesize{$\substack{0.6869  \\  \\ (0.3851 - 0.8305)}$}\\
		\hline		
	\end{tabular}
	\label{Tab:estimated-weights}
\end{table}

\end{document}